\newcommand{\E}{\mathbb{E}\thinspace }
\newcommand{\e}{\mathcal{E}\thinspace }
\newcommand{\D}{\mathfrak{D}}
\newcommand{\R}{\mathbb{R}}
\newcommand{\PP}{\mathbb{P}\thinspace }
\newtheorem{theorem}{Theorem}
\newtheorem{assumption}[theorem]{Assumption}
\newtheorem{condition}[theorem]{Condition}
\newtheorem{definition}[theorem]{Definition}
\newtheorem{lemma}[theorem]{Lemma}
\newtheorem{proposition}[theorem]{Proposition}
\newtheorem{remark}[theorem]{Remark}
\numberwithin{equation}{section}
\numberwithin{theorem}{section}
\begin{document}

\title{On the Market-Neutrality of Optimal Pairs-Trading Strategies}

\author{Bahman Angoshtari
\thanks{Department of Mathematics, University of Michigan. Email: \texttt{bango@umich.edu}. This article is part of the author's D.Phil. thesis at University of Oxford. The author would like to thank Anders Rahbek, Ronnie Sircar, Hoi Ying Wong, Thaleia Zariphopoulou, and Xunyu Zhou for their comments and suggestions. Support from the Oxford-Man Institute of Quantitative Finance is acknowledged.}
}

\date{\today}
% \date{First version: September 2013; This version: \today}

\maketitle

\begin{abstract}

\noindent We consider the problem of optimal investment in a market with two cointegrated stocks and an agent with CRRA utility. We extend the findings of Liu and Timmermann [\emph{The Review of Financial Studies}, 26(4):1048-1086, 2013] by paying special attention to when/if the associated stochastic control problem is well-posed and providing a verification result. Our new findings lead to a sharp well-posedness condition which is, surprisingly, also the necessary and sufficient condition for the optimal investment to be market-neutral (i.e. having offsetting long/short positions in the stocks). Hence, we provide a theoretical justification for market-neutral pairs-trading which, despite having a strong practical relevance, has been lacking a theoretical ground.   \vspace{1ex}

\noindent \emph{Keywords}: optimal investment, pairs-trading, cointegration, market-neutrality, well-posedness, stochastic control.
\end{abstract}

\section{Introduction}
This article is a contribution to portfolio management using assets whose price processes are \emph{cointegrated}. Such processes have the property that linear combinations of them is stationary. Intuitively speaking, two cointegrated processes never get too far apart and have a long-run equilibrium with respect to each other. Many economic and financial data series are known to exhibit these properties. Examples include interest rates (\cite{EngleGranger1987} and \cite{Halletal1992}), foreign exchange rates (\cite{BaillieBollerslev1989}), equities (\cite{CerchiHavenner1988}), equity indices (\cite{TaylorTonks1989}), future and spot prices (\cite{BrennerKroner1995}), and commodities (\cite{MacDonaldTaylor1988}).

In portfolio management, there are specific strategies for trading assets which have co-movement in their prices. Such strategies are referred to as ``\emph{pairs trading}'', ``\emph{spread trading}'', or ``\emph{convergence trading}''. These strategies involve identifying two or more assets whose prices are driven by common economic forces, and then trading on any temporary deviation of the prices from their long-run equilibrium. We refer the reader to \cite{Ehrman2006} and \cite{leung2015optimal} for a detailed exposition on pairs-trading as well as on historical insights.

There are two major themes in the convergence trading literature: 1) Empirical studies on profitability of convergence trading; and 2) Theoretical studies on optimal convergence trading.

The first extensive empirical study on convergence trading was provided by \cite{GatevGoetzmannRouwenhorst1999,GatevGoetzmannRouwenhorst2006} where they documented economically significant profits from implementing a very simple pairs-trading rule in the US equity market over an extended period of time. Other empirical studies in this direction include \cite{PapadakisWysocki2007}, \cite{EngelbergGaoJagannathan2009}, \cite{KhandaniLo2007}, \cite{DoFaff2010}, \cite{AvellanedaLee2010}, \cite{KhandaniLo2007} and \cite{galenkoPopovaPopova2012}. The trading strategies in these empirical studies were all pre-assumed rather than being the outcome of some sort of portfolio optimization. Theoretical studies on convergence-trading in continuous-time optimal portfolio choice settings include \cite{Xiong2001}, \cite{LiuLongstaff2004}, \cite{JurekYang2007}, \cite{Primbsetal2008}, \cite{ChiuWong2011}, \cite{chiu2015dynamic}, \cite{LiuTimmermann2012}, and \cite{tourin2013dynamic}. Assuming that the spread is an Ornstein-Uhlenbeck (O-U) process and that the investors have logarithmic utility, \cite{Xiong2001} formulated a general equilibrium model and solved it numerically. The results showed that pairs-trading can have destabilizing effects on the market. \cite{LiuLongstaff2004} modeled the spread by a Brownian bridge process and provided analytical solution for the associated Merton problem with logarithmic utility. \cite{JurekYang2007} and \cite{Primbsetal2008} considered an O-U spread and solved the optimal expected terminal utility problem for power utilities in closed form. The former study, provided analytical evidence for the potential destabilizing behavior of the convergence traders, consistent with the numerical solution of \cite{Xiong2001} general equilibrium model. Finally, \cite{ChiuWong2011}, \cite{chiu2015dynamic}, \cite{LiuTimmermann2012}, and \cite{tourin2013dynamic} modeled the original cointegrated prices by a \emph{continuous-time error correction model}. \cite{ChiuWong2011, chiu2015dynamic} solved the mean-variance portfolio selection problem, while \cite{LiuTimmermann2012} and \cite{tourin2013dynamic} solved the Merton investment problem with power and exponential utilities, respectively.

All of the empirical and theoretical studies above, apart from \cite{ChiuWong2011},  \cite{chiu2015dynamic}, \cite{LiuTimmermann2012}, and \cite{tourin2013dynamic}, implicitly or explicitly assumed the investor's strategy to be ``\emph{market-neutral}''. When trading equities, market-neutrality is interpreted as offsetting long/short position, such that the (monetary) investments in the assets offset each other at all times. See chapter 2 of \cite{Ehrman2006} for further discussion on different forms of market neutrality and its significance in the practice of convergence-trading.
% The main advantage of market-neutral strategies is that the profits and losses of such strategies depend solely on the changes in the underlying mean reverting signal, i.e. the cointegration relationship(s). Therefore, by assuming the investment policy to be market-neutral, one only needs to specify the dynamics of the mean reverting combination(s) of the prices, and ignore the individual asset prices altogether. This reduces the dimensionality of the problem and, more importantly, facilitates both the model estimation and calibration, for the mean reverting signal is, by design, a stationary process.

Despite its widespread use and strong practical relevance, the market-neutrality of the optimal convergence-trading strategy is yet to be justified. Indeed, a rigorous normative study that yields market-neutral policies as optimal has been lacking. To the best of our knowledge, the only study that investigated this assumption is \cite{LiuTimmermann2012} which, however, provides a negative result. Indeed, assuming a market setting in which assets follow a continuous time error correction model and an agent maximizing her power utility of terminal wealth, it is therein shown that the optimal strategies are not market-neutral. Therefore, there is, from a theoretical point of view, an unanswered fundamental question. \emph{How can one justify this investment practice in a theoretical portfolio choice framework? In other words, can one identify a market model and a preference criterion for the investor which support pairs-trading?} The answer to this question will be the main focus of this paper. In other words, the main motivation is to provide \emph{a theoretical ground} for market-neutral pairs-trading, \emph{without a priori restricting the portfolio strategies}.

In this paper, we re-consider the investment model assumed in \cite{LiuTimmermann2012}. Our contribution to the existing literature is threefold. Firstly, we show that the problem might be ill-posed, in the sense that the maximum expected utility becomes infinite in a finite time horizon for specific values of the risk aversion parameter (see Theorem \ref{thm:CRRA}). This finding is consistent with existing literature of optimal investment. In particular, \cite{KimOmberg1996} and \cite{KornKraft2004} has observed this phenomenon and coined the terms ``\emph{nirvana strategies}'' and ``\emph{I-unstable}'' for investment strategies that yield infinite expected utility in finite investment horizon.

Secondly, we provide the so-called ``\emph{verification result}'' for the stochastic control problem (see Theorem \ref{thm:Verify}). This is a necessary step since the problem does not satisfy the Lipschitz or polynomial growth conditions needed for the classical verification result, e.g. Theorem 3.8.1 on page 135 of \cite{FlemingSoner2006}.

Thirdly, we identify the necessary and sufficient condition on market parameters for well-posedness of the investment problem for all values of the investor's risk aversion (see Condition \ref{ch1cond:MNcondition}). Interestingly, the same condition turns out to be the necessary and sufficient condition for market-neutrality of the optimal trading strategy (see Theorem \ref{ch1thm:MNcondition}).

Our findings provide economic viability for the practice of market-neutral pairs-trading by the following argument. Note that our investment model (as well as the one in \cite{LiuTimmermann2012}) is a ``\emph{partial equilibrium}'' model. In other words, it is a priori assumed that the assets are traded in a market in its equilibrium state. Investors who achieve infinite expected utility in a finite investment horizon, cannot exist in a partial equilibrium model, since they would have pushed the marked out of equilibrium by aggressively investing in the assets. Thus, any combination of model parameters that lead to existence of such agents must be excluded in a partial equilibrium model. Our findings show that imposing such restriction is equivalent to assuming that the optimal investment strategy is market-neutral.

The remainder of this paper is organized as follows. In Section \ref{sec:Market}, we describe the financial model. In Section \ref{sec:MN}, we introduce market-neutral investment strategies and discuss their significance. In Section \ref{sec:Optimal}, we formulate and solve the optimal investment problem for CRRA investors. In particular, we pay special attention to when/if the stochastic control problem is well-posed, and provide the verification result. In Section \ref{sec:WellPosed}, we provide our main result by introducing Condition \ref{ch1cond:MNcondition} and showing that it is the necessary and sufficient condition for well-posedness of the optimal investment problem as well as market-neutrality of the optimal strategy. Section \ref{sec:conclude} concludes the paper.

\section{Market setting}\label{sec:Market}
The market consists of a riskless asset with zero interest rate,\footnote{Assuming zero short rate is not restrictive. If the short-rate is non-zero or even time-varying (but deterministic), one can recover the zero short-rate assumption by using the discounted prices.} and two stocks whose price processes, $S=(S_t^1,S_t^2)_{t\ge 0}$ satisfy the ``\emph{continuous-time error correction model}'', i.e.
\begin{equation}
	\frac{dS_t^1}{S_t^1} = \alpha_1 Z_t dt + \sigma_1dW_t^1,\label{ch1eq:StocksDynamics}
\end{equation}
and
\begin{equation}
\frac{dS_t^2}{S_t^2} = \alpha_2 Z_t dt + \sigma_2\rho
dW_t^1 + \sigma_2 \sqrt{1-\rho^2} \thinspace dW_t^2,\label{ch1eq:StocksDynamics_S2}
\end{equation}
where the ``\emph{log-spread}'' $(Z_{t})_{t\geq0}$ is defined as
\begin{equation}\label{ch1eq:CointegrationRegression}
Z_{t} := \log S_{t}^{1}-c\log S_{t}^{2}
+\frac{1}{2}\left(\sigma_1^2-c\sigma_2^2\right)t.
\end{equation}
Here, $(W_t)_{t\ge0} = (W_t^1,W_t^2)^\top_{t\ge0}$ is a two dimensional standard Brownian motion in a filtered probability space $(\Omega, \mathcal{F}, (\mathcal{F}_t)_{t\ge0}, \mathbb{P})$, where $(\mathcal{F}_t)_{t\ge0}$ is the augmented filtration generated by $(W_t)$. All the coefficients are constant, and further assumption on coefficients will be given below. For future reference, we also note that \eqref{ch1eq:StocksDynamics} and \eqref{ch1eq:StocksDynamics_S2} have the matrix form
\begin{equation}\label{ch1eq:StocksDynamics_mat}
d S_t = \text{diag}(S_t)\left( \alpha Z_t dt + \Sigma dW_t\right),
\end{equation}
where
\begin{equation}\label{ch1eq:Sigma}
\Sigma := \begin{pmatrix}
\sigma_1 & 0\\
\sigma_2 \rho & \sigma_2\sqrt{1-\rho^2}
\end{pmatrix}, \quad \text{ and }\quad
\alpha := \begin{pmatrix}
\alpha_1\\
\alpha_2
\end{pmatrix}.
\end{equation}

The following assumptions on the constant coefficients are standing throughout.

\begin{assumption}\label{ch1asm:StandingAsm}
  (i) $\sigma_{1},\sigma_{2}>0$ and $\left\vert \rho\right\vert <1$; (ii) $\alpha_1 < c \alpha_2$; and (iii) $Z_0:= $ is a Gaussian random variable with mean zero and variance
  \[
  \frac{\sigma_1^2 + c^2 \sigma_2^2 -2 c \rho \sigma_1 \sigma_2}
  {2(c\alpha_2-\alpha_1)},
  \]
  and it is independent of $(W_t)_{t\ge0}$.
\end{assumption}

Assumption \ref{ch1asm:StandingAsm}.(i) is the usual non-degeneracy assumption on diffusion-type market models. The rule of Assumption \ref{ch1asm:StandingAsm}.(ii)--(iii) is to enforce $(Z_t)_{t\ge0}$ to be a stationary Ornstein-Uhlenbeck process, as shown by the following lemma.

\begin{lemma}\label{lem:StationarityOfResidual}
The log-spread $(Z_t)_{t\ge0}$ satisfies
\begin{equation}\label{ch1eq:Z_Dynamics1}
dZ_{t}=-\kappa Z_t dt + \sigma_Z dW_t^Z,
\end{equation}
where
\begin{equation}\label{ch1eq:Z_Dynamics2}
\kappa := c \alpha_2 - \alpha_1>0,\quad \sigma_Z^2 := \sigma_1^2 + c^2 \sigma_2^2
-2 c \rho \sigma_1 \sigma_2,
\end{equation}
and
\begin{equation}\label{ch1eq:W_Z}
W_t^Z := \frac{1}{\sigma_Z}\left\{ \left(\sigma_1 - c \sigma_2 \rho\right)
W_t^1 - c \sigma_2 \sqrt{1-\rho^2} W_t^2 \right\}  .
\end{equation}
In particular, $(Z_t)_{t\geq0}$ is an Ornstein-Uhlenbeck process given by
\begin{equation}\label{ch1eq:Z_explicit}
Z_t = e^{-\kappa t} \left(Z_0 + \sigma_Z \int_0^t{e^{\kappa s}dW_s^Z}\right),
\end{equation}
which is a stationary Gaussian process with $E \left(Z_t\right) = 0$ and
\begin{equation}\label{ch1eq:Z_moments}
\E \left(Z_t Z_s\right) =
\frac{\sigma_Z^2}{2\kappa} e^{-\kappa \vert t-s \vert}, \quad t,s\geq0.
\end{equation}
\end{lemma}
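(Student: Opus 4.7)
The plan is to compute $dZ_t$ directly by applying It\^{o}'s formula, identify the resulting SDE as an Ornstein-Uhlenbeck equation, and then read off the stationary moments from the explicit solution.

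First I would apply It\^{o}'s formula to $\log S_t^1$ and $\log S_t^2$ using the dynamics \eqref{ch1eq:StocksDynamics}--\eqref{ch1eq:StocksDynamics_S2}, producing the quadratic-variation corrections $-\tfrac{1}{2}\sigma_1^2\,dt$ and $-\tfrac{1}{2}\sigma_2^2\,dt$ respectively. Differentiating the definition \eqref{ch1eq:CointegrationRegression} of $Z_t$ and collecting terms, the deterministic $\tfrac{1}{2}(\sigma_1^2 - c\sigma_2^2)\,dt$ added in the definition of $Z_t$ is precisely what is needed to cancel the It\^{o} corrections. The remaining drift is $(\alpha_1 - c\alpha_2) Z_t\,dt = -\kappa Z_t\,dt$ by Assumption \ref{ch1asm:StandingAsm}.(ii), which simultaneously shows $\kappa>0$. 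The martingale part reads $(\sigma_1 - c\sigma_2\rho)dW_t^1 - c\sigma_2\sqrt{1-\rho^2}\,dW_t^2$; a short computation yields its quadratic variation $[(\sigma_1 - c\sigma_2\rho)^2 + c^2\sigma_2^2(1-\rho^2)]dt = \sigma_Z^2\,dt$. Dividing by $\sigma_Z$ gives the process $W^Z$ defined in \eqref{ch1eq:W_Z}, which is a continuous local martingale with quadratic variation $t$, hence a standard Brownian motion by L\'{e}vy's characterization. This establishes \eqref{ch1eq:Z_Dynamics1}--\eqref{ch1eq:W_Z}.

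Next, I would obtain the explicit representation \eqref{ch1eq:Z_explicit} by the standard variation-of-constants trick: apply It\^{o} to $e^{\kappa t} Z_t$ and integrate. From here, since $Z_0$ is Gaussian and independent of $W^Z$ by Assumption \ref{ch1asm:StandingAsm}.(iii), the process $Z_t$ is Gaussian with mean zero. A direct computation of the variance gives
\begin{equation*}
\mathrm{Var}(Z_t) = e^{-2\kappa t}\mathrm{Var}(Z_0) + \sigma_Z^2 e^{-2\kappa t}\int_0^t e^{2\kappa s}\,ds = e^{-2\kappa t}\frac{\sigma_Z^2}{2\kappa} + \frac{\sigma_Z^2}{2\kappa}\bigl(1 - e^{-2\kappa t}\bigr) = \frac{\sigma_Z^2}{2\kappa},
\end{equation*}
where the precise value chosen for $\mathrm{Var}(Z_0)$ in Assumption \ref{ch1asm:StandingAsm}.(iii) is exactly what makes the two exponential terms telescope into a constant. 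For the covariance with $s\le t$, I would use $Z_t = e^{-\kappa(t-s)} Z_s + \sigma_Z e^{-\kappa t}\int_s^t e^{\kappa u}\,dW_u^Z$; independence of the increment from $\mathcal{F}_s$ then yields $\mathbb{E}(Z_t Z_s) = e^{-\kappa(t-s)}\mathrm{Var}(Z_s) = (\sigma_Z^2/2\kappa)e^{-\kappa(t-s)}$, and symmetry in $(s,t)$ gives \eqref{ch1eq:Z_moments}.

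There is no genuine obstacle here; the only step requiring care is the bookkeeping in the first part, namely verifying that the explicit $\tfrac{1}{2}(\sigma_1^2 - c\sigma_2^2)t$ term in \eqref{ch1eq:CointegrationRegression} and the specific initial variance in Assumption \ref{ch1asm:StandingAsm}.(iii) are engineered exactly so that the drift of $Z_t$ reduces to $-\kappa Z_t$ and the variance is time-invariant. Once those two cancellations are verified, everything else is textbook Ornstein-Uhlenbeck calculus.
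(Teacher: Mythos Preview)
Your proposal is correct and follows essentially the same route as the paper: apply It\^{o}'s formula to \eqref{ch1eq:CointegrationRegression} to obtain the Ornstein--Uhlenbeck SDE \eqref{ch1eq:Z_Dynamics1}, invoke the standard explicit solution \eqref{ch1eq:Z_explicit}, and compute the moments using Assumption \ref{ch1asm:StandingAsm}.(ii)--(iii). The paper's proof is terser (it cites \cite{KartzasShreve1991} for the explicit solution and says ``the rest of the proof readily follows''), while you spell out the L\'{e}vy characterization for $W^Z$ and the variance/covariance bookkeeping, but there is no substantive difference in approach.
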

\begin{proof}
Applying It\^o's lemma to find the dynamics of $(Z_t)$ from (\ref{ch1eq:StocksDynamics}) and
(\ref{ch1eq:StocksDynamics_S2}) yields (\ref{ch1eq:Z_Dynamics1})-(\ref{ch1eq:W_Z}). \eqref{ch1eq:Z_explicit} is the well-known strong solution of the linear stochastic differential equation \eqref{ch1eq:Z_Dynamics1}, c.f. example 6.8 on p.~358 of \cite{KartzasShreve1991}. The rest of the proof readily follows from (\ref{ch1eq:Z_explicit}). In particular, we need Assumption \ref{ch1asm:StandingAsm}.(ii)--(iii) to obtain $E \left(Z_t\right) = 0$ and \eqref{ch1eq:Z_moments} which, in turn, yield stationarity of $(Z_t)$.
\end{proof}

\begin{remark}
	The proof of Lemma \ref{lem:StationarityOfResidual} reveals that the term $(\sigma_1^2-c\sigma_2^2)t/2$ in (\ref{ch1eq:CointegrationRegression}) is specifically chosen so that $\E Z_t = 0$, which serves three purposes. Firstly, it simplifies the algebra in the proof of several results below. Secondly, from an economic point of view, this assumption means that the long-run equilibrium level of $(Z_t)$ is $0$. From \eqref{ch1eq:StocksDynamics} and \eqref{ch1eq:StocksDynamics_S2}, it then follows that the long-run equity risk premiums are zero and the only reason for investing in the stocks is to exploit short-term deviations from equilibrium when $Z_t\ne0$. Thus, choosing $\E Z_t = 0$ facilitates the analysis of pairs-trading by isolating the effect of cointegration. Finally, estimation methods that are used to calibrate \eqref{ch1eq:StocksDynamics}--\eqref{ch1eq:CointegrationRegression} (e.g. Engle-Granger two-step method and Johansen's approach) alway impose $\E Z_t = 0$.
\end{remark}

Recall that two stochastic processes are \emph{cointegrated} if a linear combination of them is stationary. Hence, Lemma \ref{lem:StationarityOfResidual} implies that the stock log-prices are cointegrated. As it was mentioned in the introduction, more can be said about the connection of the market model considered herein and the theory of cointegration. Indeed, as shown in \cite{KesslerRahbek2001}, \cite{KesslerRahbek2004} and \cite{DuanPliska2004}, the price dynamics given by (\ref{ch1eq:StocksDynamics}) and (\ref{ch1eq:StocksDynamics_S2}) is the diffusion limit of a so-called \emph{error correction model}. These models are discrete-time representations of systems of cointegrated processes. Refer to \cite{Hamilton1994}, \cite{Johansen1995}, and \cite{Juselius2006} for a more detailed exposition on cointegration.

We consider an agent who invests in the market over a fixed trading horizon $[0,T]$ and with an initial endowment $x>0$. An \emph{admissible strategy} $\pi^\top=(\pi^1_t,\pi^2_t)_{t\in[0,T]}$ is defined as an $(\mathcal{F}_t)$-adapted process satisfying
\begin{equation}\label{ch1eq:integrability}
\int_0^T{\left(|\pi_t^\top \alpha Z_t| + \pi_t^\top
\Sigma\Sigma^\top \pi_t \right)dt}<+\infty, \quad \mathbb{P}\text{-almost
surely}.
\end{equation}
Here, $\pi^i_t$ is the portfolio weight of the i-th stock, i.e. the \emph{proportion} of agent's wealth invested in the i-th stock at $t$. Thus, $(1-\pi^1_t-\pi^2_t)_{t\in[0,T]}$ is the proportion of wealth invested in the bank account. Short-selling of the stocks and the bank account is allowed, and is represented by negative portfolio weights.  The set of admissible strategies is denoted by $\mathcal{A}$.

For any admissible strategy $\pi^\top=(\pi^1,\pi^2)\in\mathcal{A}$, the agent's wealth $X^\pi=(X_t^\pi)_{t\in[0,T]}$ is given by the budget constraint
\begin{equation}\label{ch1eq:wealthSE}
	X^\pi_t = x + \int_0^t X^\pi_u Z_u \pi_u^\top \alpha du + \int_0^{t} X^\pi_u \pi^\top_u \Sigma dW_u.
% X^\pi = x\thinspace\mathcal{E}\left(\int_0^{.}\pi_t^\top \alpha Z_t dt
% + \int_0^{.}\pi^\top_t \Sigma dW_t\right),
\end{equation}
In particular, $X^\pi_t>0$ $\mathbb{P}$-almost surely for all $t\in[0,T]$, since it is the stochastic exponential of a continuous process.

\section{Market-neutral investment}\label{sec:MN}
As mentioned earlier, the main motivation of this paper is to provide a theoretical ground for the so-called ``\emph{market-neutral}'' trading strategies. In this section, we introduce these strategies and discuss their significance.

Let $\mathcal{F}^Z_t$, $t\ge0$, be the augmentation of $\sigma(Z_u: 0\le u\le t)=\sigma(W^Z_u: 0\le u\le t)$. Note that $\mathcal{F}^Z_t\subset\mathcal{F}_t$, $t\ge0$,
% \footnote{Since $\sigma_Z$, the diffusion term in \eqref{lem:StationarityOfResidual}, is nonzero, we have that $(\mathcal{F}^Z_t)_{t\ge0}$ is also the filtration generated by $(W^Z_t)$ of \eqref{ch1eq:W_Z}. Therefore, $\mathcal{F}^Z_t\subset\mathcal{F}_t$, for all $t\ge0$.}
that is, we gain less information from knowing the log-spread $(Z_u)_{0\le u\le t}$ than from knowing the stock prices $(S_u)_{0\le u\le t}$.

The subfiltration $(\mathcal{F}^Z_t)$ has an important role in practice. Since $(Z_t)$ is stationary, it is possible to calibrate a model for $(Z_t)$ using historical data. On the other hand, the stock prices $S^1_t$ and $S^2_t$ are generally not stationary. Thus, calibrated models for $S^1_t$ and $S^2_t$ are less reliable than their counterparts for $Z_t$. It is then natural that practitioners focus on trading strategies and portfolios that are $(\mathcal{F}^Z_t)$-adapted. Such strategies are called ``\emph{market-neutral}'', since their dynamics only depends on a stationary signal $(Z_t)$ which, by design, is ``immune'' to non-stationarity of the market (e.g. bull/bear states of the market).

\begin{definition}\label{def:MN}
	An admissible strategy $\pi^\top=(\pi^1,\pi^2)\in\mathcal{A}$ is market-neutral (M-N) if both $\pi$ and $X^\pi$ are $(\mathcal{F}_t^Z)$-adapted.
\end{definition}

The majority of the existing literature define M-N strategies in a different way. In particular, such strategies are defined as dollar-neutral, share-neutral or beta-neutral, depending on how the cointegration relationships among the prices are defined. For example, when the mean reverting signal is the logarithm of the price differences, i.e. when $c=1$ as in \cite{LiuTimmermann2012}, market-neutrality is interpreted as dollar-neutrality, which requires offsetting long/short positions in the stocks such that the (monetary) investments in the two stocks is zero at all times. This assumption is most common when trading equities. When the mean reverting signal is the price difference, market neutrality is interpreted as share-neutrality, in which case the number of shares (or contracts) in different assets offset each other. This assumption is relevant to futures markets, or when the assets are almost identical. See chapter 2 of \cite{Ehrman2006} for further discussion on different types of market-neutrality and their significance in the practice of convergence-trading.

We consider the following alternative definition of M-N portfolios, which is consistent with the existing literature.

\begin{definition}\label{ch1def:PairsTrading}
	An admissible strategy $\pi=(\pi^1,\pi^2)\in\mathcal{A}$ is M-N if $\pi$ is $(\mathcal{F}_t^Z)$-adapted and
	\begin{equation}\label{ch1eq:MN}
	\pi_{t}^{2}=-c\thinspace\pi_{t}^{1};\quad \PP .a.s.\; \forall t\ge0,
	\end{equation}
	with $c$ as in \eqref{ch1eq:CointegrationRegression}.
\end{definition}

Our next result shows that the two definitions are equivalent.

\begin{lemma}\label{lem:MN}
	Let $\pi=(\pi^1,\pi^2)$ be an $(\mathcal{F}_t^Z)$-adapted admissible strategy. Then, $X^\pi$ is $(\mathcal{F}_t^Z)$-adapted if and only if \eqref{ch1eq:MN} holds.
\end{lemma}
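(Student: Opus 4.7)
The plan is to introduce the Brownian motion orthogonal to $W^Z$, rewrite the martingale part of $dX^\pi/X^\pi$ in the new basis, and show that the coefficient on the orthogonal component is proportional to $c\pi^1_t+\pi^2_t$. The two implications will then follow from quite different arguments: the ``if'' direction is a direct adaptedness check, while the ``only if'' direction rests on a conditional Gaussian argument that exploits the independence of the orthogonal Brownian motion from $\mathcal{F}^Z_\infty$.

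Concretely, I would define
\[
W^\perp_t := \frac{1}{\sigma_Z}\bigl\{c\sigma_2\sqrt{1-\rho^2}\,W^1_t + (\sigma_1 - c\sigma_2\rho)\,W^2_t\bigr\},
\]
and verify by L\'evy's characterization (using $\sigma_Z^2 = \sigma_1^2 + c^2\sigma_2^2 - 2c\rho\sigma_1\sigma_2$) that $W^\perp$ is a standard Brownian motion with $[W^Z, W^\perp]\equiv0$, hence independent of $W^Z$; together with Assumption \ref{ch1asm:StandingAsm}.(iii), this makes $W^\perp$ independent of $\mathcal{F}^Z_\infty$. A routine algebraic manipulation then yields the decomposition
\[
\pi^\top_t \Sigma\, dW_t = \delta_t\, dW^Z_t + \gamma_t\, dW^\perp_t,\qquad \gamma_t = \frac{\sigma_1\sigma_2\sqrt{1-\rho^2}}{\sigma_Z}\bigl(c\pi^1_t + \pi^2_t\bigr),
\]
with $\delta_t$ also $(\mathcal{F}^Z_t)$-adapted. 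Writing $X^\pi$ as the corresponding stochastic exponential exhibits the drift, the $W^Z$-integral, and the $W^\perp$-integral as the only sources of randomness.

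For the ``if'' implication, if $\pi^2_t=-c\pi^1_t$ then $\gamma\equiv0$, so the exponent defining $X^\pi$ involves only $(\mathcal{F}^Z_t)$-adapted integrands and the driver $W^Z$, making $X^\pi$ itself $(\mathcal{F}^Z_t)$-adapted. For the ``only if'' implication, suppose $X^\pi$ is $(\mathcal{F}^Z_t)$-adapted. Subtracting from $\log(X^\pi_t/x)$ all the terms that are manifestly $(\mathcal{F}^Z_t)$-adapted leaves
\[
M_t := \int_0^t \gamma_u\, dW^\perp_u,
\]
which must therefore be $(\mathcal{F}^Z_t)$-adapted as well. I would then condition on $\mathcal{F}^Z_\infty$: because $W^\perp$ is independent of $\mathcal{F}^Z_\infty$, the path $u\mapsto \gamma_u(\omega)$ becomes deterministic after conditioning, and the Wiener-integral theory gives that $M_t$ is conditionally centered Gaussian with conditional variance $\int_0^t \gamma_u^2\, du$. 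Yet $M_t$ is $\mathcal{F}^Z_t$-measurable, hence a.s.\ constant given $\mathcal{F}^Z_\infty$, forcing the conditional variance to vanish: $\int_0^t \gamma_u^2\, du = 0$ for every $t$, $\mathbb{P}$-a.s. Since $\gamma_u$ is proportional to $c\pi^1_u+\pi^2_u$, this yields \eqref{ch1eq:MN}.

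The main obstacle I anticipate is the necessity direction, and specifically the clean execution of the conditioning step: one must justify that the Wiener integral $M_t$ against $W^\perp$ remains, after conditioning on the independent $\sigma$-field $\mathcal{F}^Z_\infty$, a centered Gaussian with the stated (random) variance. This can be handled either via a regular conditional probability argument for Gaussian integrals, or more elementarily by approximating $\gamma$ by simple $(\mathcal{F}^Z_t)$-predictable integrands for which independence makes the conditional Gaussianity transparent, and then passing to the limit in $L^2$. The remainder of the argument — deducing that $c\pi^1_t+\pi^2_t=0$ a.s.\ for a.e.\ $t$ from the vanishing of $\int_0^t \gamma_u^2\,du$ — is a standard measure-theoretic continuity step.
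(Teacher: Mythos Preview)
Your proposal is correct and rests on the same underlying idea as the paper: split $dX^\pi_t/X^\pi_t$ into a piece driven by the spread and a residual piece whose coefficient is proportional to $c\pi^1_t+\pi^2_t$. The paper does this more economically by writing
\[
\frac{dX^\pi_t}{X^\pi_t}=\pi^1_t\,dZ_t+(\pi^2_t+c\pi^1_t)\,\frac{dS^2_t}{S^2_t},
\]
and then simply asserts that $(X^\pi_t)$ is $(\mathcal{F}^Z_t)$-adapted if and only if the second term vanishes. Your version instead introduces the orthogonal Brownian motion $W^\perp$ explicitly and supplies a genuine proof of the ``only if'' direction via the conditional-Gaussian argument. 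What you gain is rigor: the paper's ``clearly'' hides exactly the step you work out, namely why an $(\mathcal{F}^Z_t)$-adapted stochastic integral against a Brownian motion independent of $\mathcal{F}^Z_\infty$ must have vanishing integrand. What the paper's formulation buys is brevity and a coordinate-free statement that avoids the algebra of building $W^\perp$. Both routes are valid; yours is the more careful one.
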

\begin{proof}
	From \eqref{ch1eq:wealthSE}, we have
	\[
		\frac{d X^\pi_t}{X^\pi_t} = \pi^1_t \frac{d S^1_t}{S^1_t} + \pi^2_t \frac{d S^2_t}{S^2_t}
		= \pi^1_t \left(\frac{d S^1_t}{S^1_t} - c \frac{d S^2_t}{S^2_t}\right) + (\pi^2_t+c\pi^1_t) \frac{d S^2_t}{S^2_t}
		= \pi^1_t dZ_t + (\pi^2_t+c\pi^1_t) \frac{d S^2_t}{S^2_t}.
	\]
	Clearly, $(X^\pi_t)$ is $(\mathcal{F}_t^Z)$-adapted if and only if the second term on the right side vanishes, which is equivalent to \eqref{ch1eq:MN}.
\end{proof}

As mentioned earlier, it is a common industry practice to consider M-N investment in pairs-trading scenario. It is then possible to only consider (and calibrate) the dynamics of the log-spread, and ignore the individual asset prices altogether. This reduces the dimensionality of the problem and, more importantly, facilitates the process of model estimation and calibration, as the spread is a stationary process while the original price processes are not.

On the other hand, the approach taken by academics is not uniform. Early studies such as \cite{GatevGoetzmannRouwenhorst2006}, \cite{LiuLongstaff2004}, \cite{JurekYang2007}, and \cite{Primbsetal2008} followed the industry practice by assuming, a priori, that the investment strategies are M-N. On the other hand, more recent studies such as \cite{ChiuWong2011}, \cite{chiu2015dynamic}, \cite{LiuTimmermann2012} and \cite{tourin2013dynamic} do not impose such restriction and find that the optimal strategy is, in general, not M-N. Therefore, there is, from a theoretical point of view, an unanswered fundamental question. \emph{How can one justify investment practice of pairs-trading in a theoretical portfolio choice framework? In other words, can one identify a market model and a preference criterion for the investor which support M-N pairs-trading?} Answering these question will be the main goal of this paper.
% Specifically, we justify M-N pairs-trading \emph{without} a priori restricting the portfolio strategies.

\section{Optimal investment problem}\label{sec:Optimal}

In this section, we consider the Merton investment problem in the market setting of Section \ref{sec:Market}, which is one of the problems considered in \cite{LiuTimmermann2012}. Our contribution to the existing literature is as follows.
\begin{itemize}
	\item[(i)] We show that the problem might be ill-posed, in the sense that the maximum expected utility becomes infinite in a finite time horizon for specific values of the risk aversion parameter (i.e. $\gamma\in(0,\gamma_0)$ below).
	
	\item[(ii)] We provide the verification step for our stochastic control problem (see Theorem \ref{thm:Verify}).
\end{itemize}
All of these results are missing from \cite{LiuTimmermann2012}. As will be explained in the next section, item (i) is crucial in achieving our main goal of justifying the M-N pairs-trading practice.

\begin{remark}
	Achieving infinite expected utility has been observed in the context of trading a stock with mean-reverting return. Examples of such studies include \cite{KimOmberg1996} and \cite{KornKraft2004} who, respectively, coined the terms ``\emph{nirvana strategies}'' and ``\emph{I-unstable}'' for investment strategies that yield infinite expected utility in finite investment horizon. 
\end{remark}

We assume that the investor faces the following problem
\begin{equation}\label{eq:Merton}
\underset{\pi\in\mathcal{A}}{\sup}\thinspace\E \frac{(X^\pi_T)^{1-\gamma} -1}{1-\gamma},
\end{equation}
for a constant $0<\gamma<1$.
\begin{remark}
	We only consider CRRA utility with ``\emph{relative risk aversion}'' $\gamma$ in the interval $(0,1)$, i.e. when the agent is more risk seeking than a log-utility investor. As we will see below, the logarithmic case as well as power utilities with $\gamma >1$ are well-posed and, thus, these cases has already been accounted for in \cite{LiuTimmermann2012}.	
\end{remark}
% Here, the utility function $U: \mathbb{R}^{+} \rightarrow \mathbb{R}$ is strictly increasing, strictly concave, continuously differentiable, and satisfies the Inada conditions,
% \begin{equation}\label{ch3eq:Inada}
% \lim_{x\downarrow0}U^{\prime}\left(x\right)  =+\infty,\quad \text{and}\quad
% \lim_{x\rightarrow+\infty}U^{\prime}\left(x\right)  =0.
% \end{equation}
% Without loss of generality, we also assume $U\left( 1\right) =0$ and $U^{\prime}\left(1\right) =1$. Finally, we assume the asymptotic elasticity of $U$ to be less than 1, i.e.
% \begin{equation}\label{ch3eq:AE}
% \underset{x\rightarrow+\infty}{\lim\sup} \frac{xU^{\prime}\left(  x\right)
% }{U\left(x\right)}<1.
% \end{equation}
%
% \begin{remark}
% All the assumptions on the utility function $U$ are standard except (\ref{ch3eq:AE}) which was originally introduced by \cite{KramkovSchachermayer1999} to handle market incompleteness. Although we assumed a complete market, we adapt (\ref{ch3eq:AE}) herein solely to provide the following result, which facilitates the proof of Theorem [??].
% \begin{lemma}\label{ch3lem:PowerDomination}
% A utility function $U$ satisfying the aforementioned conditions can be dominated by a power utility, i.e. there exists $p<1$ such that
% \begin{equation}\label{ch3eq:PowerDomination}
% U\left(x\right) \leq \frac{x^{p}-1}{p}, \quad\text{for}\quad x>0.
% \end{equation}
% \end{lemma}
% \begin{proof}
% This is a modified version of Lemma 6.5 in \cite{KramkovSchachermayer1999}. For a complete proof, see Appendix 2.A of \cite{Angoshtari2014}.
% \end{proof}
% \end{remark}

Our main insight of this section is identifying the exact \emph{well-posedness} conditions for \eqref{eq:Merton}. In particular, we introduce the ``\emph{critical relative risk aversion}'',
\begin{equation}\label{ch1eq:gamma0}
\gamma_0:= 1-\left(
\frac{\kappa} {\|(1,-c)\Sigma\| \thinspace \|\Sigma^{-1}\alpha\|}
\right)^2.
\end{equation}
Not that $0\le \gamma_0<1$ since, by the Cauchy-Schwarz inequality,
\[
	\kappa=(-1,c)\alpha \le \|(-1,c)\Sigma\| \thinspace \|\Sigma^{-1}\alpha\|.
\]
We show the following dichotomy.
\begin{enumerate}
	\item [(a)] If $0<\gamma<\gamma_{0}$, then, the Merton problem is \emph{ill-posed}. In particular, the agent's maximal expected utility of wealth increases rapidly with the investment horizon $T$ and approaches infinity at a finite critical horizon $T_N(\gamma)>0$, which is explicitly given by (\ref{eq:TN}). See Theorem \ref{thm:CRRA} below and, in particular, \eqref{eq:Nirvana}.
	
  \item [(b)] If $\gamma\geq\gamma_{0}$, then the Merton problem is \emph{well-posed}, in the sense that the value function is finite, for \emph{any choice} of time horizon $T>0$.
\end{enumerate}
\begin{remark}
	Note that since $\gamma_0 < 1$, the logarithmic case as well as power utilities with $\gamma >1$ are always well-posed. The ill-posed case may only happen for power utilities with $0<\gamma<1$, i.e. when the agent is more \emph{risk seeking} than a log-utility investor.	
\end{remark}

For the rest of this section, we solve the investment problem \eqref{eq:Merton} and show that the aforementioned dichotomy (a)-(b) holds. The value function corresponding to the stochastic control problem \eqref{eq:Merton} is given by,
\begin{equation}\label{eq:VF}
v(x,z,t) = \underset{\pi\in\mathcal{A}}{\sup}\thinspace
\E^{x,z,t}\frac{(X^\pi_T)^{1-\gamma} -1}{1-\gamma};\quad x>0, z\in\R, t\in[0,T].
\end{equation}
Here, $\E^{x, z, t}$ indicates that we condition the expectation on $X_t = x$ and $Z_t = z$. Theorem \ref{thm:CRRA} solves the related Hamilton-Jacobi-Bellman (HJB) equation and Theorem \ref{thm:Verify} provides the so-called verification result, i.e. that the solution of the HJB equation is indeed the value function.

The HJB equation corresponding to the value function \eqref{eq:VF} is
\begin{equation}\label{eq:HJB}
\begin{cases}
	\displaystyle \sup_{\pi\in\R^2} \{\mathcal{L}^\pi v(x,z,t)\}=0,\\
	v(x,z,T)=\frac{x^{1-\gamma} -1}{1-\gamma},
\end{cases}
\end{equation}
for $(x,z,t)\in\R^+\times \R \times [0,T)$. Here, the differential operator $\mathcal{L}^\pi$ is given by
\begin{equation}\label{eq:Kol}
	\mathcal{L}^\pi f := f_t - \kappa z f_z + \frac{1}{2} \sigma_Z^2 f_{zz} + x z \alpha^\top \pi f_x +\frac{1}{2} x^2 \pi^\top \Sigma\Sigma^\top \pi f_{xx} + x (1,-c) \Sigma\Sigma^\top \pi f_{xz},
\end{equation}
in which $\pi\in\R^2$ and $f$ is assumed to be twice differentiable with respect to $x$ and $z$ and differentiable with respect to $t$.

Theorem \ref{thm:CRRA} provides the solution to the HJB equation. To state the result, we need the following definitions. The ``\emph{critical time horizon}'' $T_N(\gamma)\in(0,+\infty]$ is given by
\begin{equation}\label{eq:TN}
	T_N(\gamma)=
	\begin{cases}
		+\infty; & \gamma\ge\gamma_0,\\
		\frac{\gamma}{\sigma_Z \|\Sigma^{-1}\alpha\|\sqrt{\gamma_0-\gamma}} \bigg(
		\frac{\pi}{2} + \arctan \Big( \frac{\kappa \gamma}{\sigma_Z
		\|\Sigma^{-1}\alpha\|\sqrt{\gamma_0-\gamma}} \Big) \bigg); &0<\gamma<\gamma_0.	
	\end{cases}
\end{equation}
We also introduce the ``\emph{discriminant}''
\begin{equation}\label{ch1eq:MetonEscapDisc}
	\D = \frac{\sigma_Z^2 \|\Sigma^{-1}\alpha\| ^2} {\gamma^2} (\gamma-\gamma_0),
\end{equation}
and the functions
\begin{equation}\label{eq:g}
 g(t) = \frac{\kappa}{2\gamma} t - \frac{1}{2}
	\begin{cases}
    \log \Big( \cosh\big( t\sqrt{|\D|} \big) +\frac{\kappa}{\gamma\sqrt{|\D|}} \sinh\big(
    t\sqrt{|\D|} \big) \Big);
    &\text{if }\gamma\ne\gamma_0,\\
    \vphantom{\Bigg()}
    \log \Big( 1 + \frac{\kappa}{\gamma} t \Big) ; &\text{if
    }\gamma=\gamma_0,
	\end{cases}
\end{equation}
and
\begin{equation}\label{eq:h}
    h(t) =
	\begin{cases}
    \frac{\displaystyle\vphantom{\Big(} (1-\gamma)
    \|\Sigma^{-1}\alpha\|^2} {\displaystyle\vphantom{\bigg(} \kappa\gamma +\gamma^2
    \sqrt{\mathfrak{D}} \coth \Big( t\sqrt{\mathfrak{D}} \Big) };
    &\quad \text{if }\gamma_0<\gamma<1,\\
    \frac{\displaystyle\vphantom{\big(} \kappa}
    {\displaystyle\vphantom{\Big(}\gamma \sigma_Z^2}\left(1-
    \frac{\displaystyle\vphantom{\big(}\gamma}
    {\displaystyle\vphantom{\Big(} \gamma + \kappa t}
    \right); &\quad \text{if }\gamma=\gamma_0,\\
	-\frac{\sqrt{-\mathfrak{D}}}
    {\sigma_Z^2}\tan\bigg( \arctan \Big( \frac{\kappa}
    {\gamma\sqrt{-\mathfrak{D}}}\Big) -
    \sqrt{-\mathfrak{D}}t \bigg) + \frac{\kappa}
    {\gamma\sigma_Z^2};
	&\quad \text{if }0<\gamma<\gamma_0.\\
	\end{cases}
\end{equation}

\vskip1em

\begin{theorem}\label{thm:CRRA}
	For $T< T_N(\gamma)$, the solution of the HJB equation \eqref{eq:HJB} is given by
	\begin{equation}\label{eq:VF2}
		v(x,z,t)= \frac{x^{1-\gamma} \left(e^{g(T-t)+\frac{1}{2}h(T-t)z^2}\right)^\gamma - 1}{1-\gamma};\quad (x,z,t)\in\R^+\times\R\times[0,T].
	\end{equation}
	Furthermore, for $(z,t)\in\R\times[0,T]$, the maximizer $\pi$ in \eqref{eq:HJB} is given by
	\begin{equation}\label{eq:optimalPort}
	\pi^\star(z,t) = \Big[\frac{1}{\gamma} (\Sigma\Sigma^\top)^{-1}\alpha + h(T-t)
	\begin{pmatrix}
	 1\\
	 -c
	 \end{pmatrix}\Big] z.
	\end{equation}
	Finally, for $0<\gamma<\gamma_0$, one has
	\begin{equation}\label{eq:Nirvana}
		\lim_{T\to T_N(\gamma)^-} v(x,z,0) = +\infty;\quad \forall (x,z)\in\R^+\times\R.
	\end{equation}
\end{theorem}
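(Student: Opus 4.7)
My plan is to construct the solution of the HJB equation \eqref{eq:HJB} via the classical CRRA separable ansatz
\begin{equation*}
v(x,z,t)=\frac{1}{1-\gamma}\left[x^{1-\gamma}\phi(z,t)^{\gamma}-1\right], \qquad \phi(z,t)=\exp\!\left(g(T-t)+\tfrac{1}{2}h(T-t)z^{2}\right),
\end{equation*}
with boundary values $g(0)=h(0)=0$ so that the terminal condition is respected. This form is dictated by the power-utility terminal condition and by the linear-quadratic structure of the coefficients in \eqref{ch1eq:StocksDynamics_mat}--\eqref{ch1eq:Z_Dynamics1}: only $z$ appears in the drifts (linearly), so a Gaussian-in-$z$ ansatz should close the PDE into a pair of ODEs.

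The map $\pi\mapsto\mathcal{L}^{\pi}v$ is a concave quadratic in $\pi$ whose Hessian $x^{2}v_{xx}\,\Sigma\Sigma^{\top}$ is negative definite under the ansatz (since $v_{xx}<0$ and $\Sigma\Sigma^{\top}$ is positive definite by Assumption~\ref{ch1asm:StandingAsm}(i)). The first-order condition yields
\begin{equation*}
\pi^{\star}=-\frac{v_{x}\, z}{xv_{xx}}(\Sigma\Sigma^{\top})^{-1}\alpha-\frac{v_{xz}}{xv_{xx}}\begin{pmatrix}1\\-c\end{pmatrix},
\end{equation*}
and the identities $-v_{x}/(xv_{xx})=1/\gamma$ and $-v_{xz}/(xv_{xx})=\phi_{z}/\phi=h(T-t)z$ immediately recover \eqref{eq:optimalPort}. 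Substituting $\pi^{\star}$ back, dividing out the common factor $x^{1-\gamma}\phi^{\gamma}$, and using the identities $(1,-c)\alpha=-\kappa$, $\|(1,-c)\Sigma\|^{2}=\sigma_{Z}^{2}$, and $\alpha^{\top}(\Sigma\Sigma^{\top})^{-1}\alpha=\|\Sigma^{-1}\alpha\|^{2}$, the PDE separates (by matching the constant and $z^{2}$ coefficients) into the decoupled system
\begin{equation*}
g'(\tau)=\tfrac{1}{2}\sigma_{Z}^{2}h(\tau),\qquad h'(\tau)=\sigma_{Z}^{2}h(\tau)^{2}-\tfrac{2\kappa}{\gamma}h(\tau)+\tfrac{1-\gamma}{\gamma^{2}}\|\Sigma^{-1}\alpha\|^{2},
\end{equation*}
on $\tau\in[0,T]$, with $g(0)=h(0)=0$. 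A direct computation shows that the discriminant of the right-hand side of the Riccati equation (as a quadratic in $h$) is exactly $4\mathfrak{D}$, so the three regimes $\mathfrak{D}>0,\,\mathfrak{D}=0,\,\mathfrak{D}<0$ correspond respectively to $\gamma>\gamma_{0}$, $\gamma=\gamma_{0}$, $0<\gamma<\gamma_{0}$. Standard Riccati integration in each regime produces the $\coth$, rational, and $\tan$ forms in \eqref{eq:h}, and then $g$ follows by one quadrature; I will verify \eqref{eq:g}--\eqref{eq:h} by direct differentiation rather than re-derive from scratch.

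For the nirvana statement \eqref{eq:Nirvana}, I specialize to $0<\gamma<\gamma_{0}$. The first singularity of the $\tan$ form of $h$ occurs when its argument reaches $-\pi/2$, and the resulting time matches precisely the expression $T_{N}(\gamma)$ in \eqref{eq:TN}. Near the blow-up, $h(\tau)\sim C/(T_{N}(\gamma)-\tau)$ for some $C>0$, so integrating $g'=\tfrac{1}{2}\sigma_{Z}^{2}h$ forces $g(T)\to+\infty$ (logarithmically) as $T\uparrow T_{N}(\gamma)$. Consequently, for every $(x,z)\in\mathbb{R}^{+}\times\mathbb{R}$ the exponent $g(T)+\tfrac{1}{2}h(T)z^{2}$ diverges to $+\infty$ (even at $z=0$ thanks to the $g$-term), hence so does $v(x,z,0)$, establishing \eqref{eq:Nirvana}. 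The main obstacle is purely the algebraic bookkeeping required to confirm the three-case closed forms and to identify the $\arctan$-expression in $T_{N}(\gamma)$ with the blow-up time of the $\tan$ branch; conceptually, every step is the standard Riccati analysis.
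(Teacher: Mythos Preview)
Your proposal is correct and follows essentially the same route as the paper: the same power-utility/exponential-quadratic ansatz, the same first-order condition for $\pi^\star$, and the same Riccati system (the paper merely interposes the linear PDE for $\varphi$ before substituting the exponential-quadratic form, which you do in one step). Your treatment of \eqref{eq:Nirvana} is in fact slightly more complete, since you explicitly argue the $z=0$ case via the logarithmic divergence of $g$, whereas the paper's appeal to Lemma~\ref{lem:AuxPDE} covers only $z>0$.
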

\vskip 5pt
\begin{proof}
	Assuming $v_{xx}(x,z,t)\le0$ (which will be verified later) yields that the optimizer in the point-wise optimization problem $\sup_{\pi\in\R^2} \{\mathcal{L}^\pi v(x,z,t)\}$ is
	\begin{equation}\label{ch1eq:Candidate}
	\pi^\star(x,z,t) := -\frac{z\,v_x(x,z,t)}{x\,v_{xx}(x,z,t)} 
	(\Sigma\Sigma^\top)^{-1} \alpha - \frac{v_{xz}(x,z,t)}{x\,v_{xx}(x,z,t)}
	\begin{pmatrix}
		1\\
		-c
	\end{pmatrix}.
	\end{equation} 
	By substituting $\pi^\star$ into \eqref{eq:HJB}, one obtains the fully non-linear Cauchy problem
	\begin{equation}
		\begin{cases}\label{eq:HJB2}
			v_{t}-\kappa z v_{z}+\frac{1}{2}\sigma_{Z}^{2}v_{zz} -\frac{1}{2}\|\Sigma^{-1}\alpha\|^2 z^2 \frac{v_x^2}{v_{xx}} - \frac{1}{2} \sigma_{Z}^{2}\frac{v_{xz}^2}{v_{xx}}
						+ \kappa z \frac{v_x v_{xz}}{v_{xx}}=0,\\
			v(x,z,T)=\frac{x^{1-\gamma} -1}{1-\gamma},
		\end{cases}
	\end{equation}
	for $(x,z,t)\in\R^+\times \R \times [0,T)$. Substituting the ansatz
	\begin{equation}\label{eq:ansatz}
		v(x,z,t) = \frac{\varphi(z,t)^\gamma
	x^{1-\gamma}-1}{1-\gamma}
	\end{equation}
	into \eqref{eq:HJB2} yields that the unknown function $\varphi$ satisfies
	\begin{equation}\label{ch1eq:phiPDE}
	\begin{cases}
	\varphi_t -\frac{1}{\gamma}\kappa z \varphi_z + \frac{1}{2} \sigma_Z^2
	\varphi_{zz} + \frac{1-\gamma}{2\gamma^2} z^2 \|\Sigma^{-1}\alpha\|^2 \varphi = 0;
	&\quad (t,z) \in [0,T)\times \mathbb{R},\\
	\varphi(z,T)=1;&\quad z\in\R.
	\end{cases}
	\end{equation}
	This PDE is solved in Appendix \ref{ch1app:2ndPDE}. In particular, by taking $\mathbf{a} = (1/\gamma) \Sigma^{-1}\alpha$, $\mathbf{b}^\top=(1,-c)\Sigma$, and $\xi=1-\gamma$, one may re-write (\ref{ch1eq:phiPDE}) as (\ref{ch1eq:2ndPDE}). The corresponding escape criterion discriminant defined by (\ref{ch1eq:EscapeDisc}) is
	\[
	\mathfrak{D} = \frac{\kappa^2}{\gamma^2} - \frac{1-\gamma} {\gamma^2} \sigma_Z^2
	\|\Sigma^{-1}\alpha\|^2 = \frac{\sigma_Z^2 \|\Sigma^{-1}\alpha\| ^2} {\gamma^2} (\gamma-\gamma_0),
	\]
	which coincide with \eqref{ch1eq:MetonEscapDisc}. In particular, $\D\ge0$ if and only if $\gamma\ge\gamma_0$. Therefore, $T_{esc}$, $g$, and $h$ of Appendix \ref{ch1app:2ndPDE} become $T_N$ of \eqref{eq:TN}, $g$ of \eqref{eq:g}, and $h$ of \eqref{eq:h}, respectively. Lemma \ref{lem:AuxPDE} then yields that
	\[
		\varphi(z,t) = \exp\Big( g(T-t) + \frac{1}{2} h(T-t) z^2\Big);\quad (z,t)\in\R\times[0,T],
	\]
	and substituting into \eqref{eq:ansatz} yields the solution \eqref{eq:VF2}. It can be easily checked that $v_{xx}\le0$, and \eqref{ch1eq:Candidate} yields the candidate optimal control of \eqref{eq:optimalPort}. Finally, by Lemma \ref{lem:AuxPDE}, it follows that, for all $z>0$, $\lim_{T\to  T_N(\gamma)^-}\varphi(z,0)=+\infty$ which, in turn, yield \eqref{eq:Nirvana}.
\end{proof}

We end this section by providing the so-called verification step. In other words, we show that \eqref{eq:VF2}, i.e. the solution of the HJB equation, is the value function given by \eqref{eq:VF}.

\begin{remark}
	Classical verification results, e.g. Theorem 3.8.1 on page 135 of \cite{FlemingSoner2006}, require either Lipschitz conditions on the coefficients of state equations or polynomial growth of the candidate value function. Neither of these conditions holds in our setting. In particular, the Lipschitz conditions fail because of the term $X^\pi_u Z_u$ in the drift of \eqref{ch1eq:wealthSE} and that $(X^\pi_t)$ and $(Z_t)$ are both unbounded. Furthermore, as the following lemma shows, $h$ in \eqref{eq:VF2} is strictly positive and the candidate value function $v$ has exponential growth in $z$.
	\begin{lemma}\label{lem:Monotone}
		The function $h$ given by \eqref{eq:h} is strictly positive and strictly increasing on $\big[0,T_N(\gamma)\big)$.
	\end{lemma}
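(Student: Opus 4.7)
The plan is to treat the three branches of the definition of $h$ in \eqref{eq:h} separately, according to the sign of the discriminant $\mathfrak{D}$. In each regime I will verify that $h(0)=0$ and $h'(t)>0$ throughout $[0,T_N(\gamma))$; strict positivity of $h$ on $(0,T_N(\gamma))$ is then immediate from strict monotonicity starting from $h(0)=0$.

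\textit{Case} $\gamma_0<\gamma<1$ (so $\mathfrak{D}>0$ and $T_N(\gamma)=+\infty$). Read the formula as $h(t)=(1-\gamma)\|\Sigma^{-1}\alpha\|^2/\bigl[\kappa\gamma+\gamma^2\sqrt{\mathfrak{D}}\coth(t\sqrt{\mathfrak{D}})\bigr]$: the numerator is a positive constant, while the denominator is strictly positive and strictly decreasing on $(0,\infty)$ because $\coth$ is. Hence $h$ is strictly increasing on $(0,\infty)$, and $\coth(s)\to+\infty$ as $s\downarrow 0$ forces $h(0^+)=0$.

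\textit{Case} $\gamma=\gamma_0$. Here $h$ simplifies to $h(t)=\tfrac{\kappa}{\gamma\sigma_Z^2}\cdot\tfrac{\kappa t}{\gamma+\kappa t}$, which is visibly strictly increasing on $[0,\infty)$ with $h(0)=0$.

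\textit{Case} $0<\gamma<\gamma_0$ (so $\mathfrak{D}<0$). Set $\beta:=\sqrt{-\mathfrak{D}}>0$ and $\theta_0:=\arctan\bigl(\kappa/(\gamma\beta)\bigr)\in(0,\pi/2)$. The argument of $\tan$ in \eqref{eq:h} is $\theta_0-\beta t$, which decreases continuously from $\theta_0$ and stays in $(-\pi/2,\pi/2)$ precisely on the interval on which $T_N(\gamma)$ is defined. Differentiation then gives $h'(t)=(\beta^2/\sigma_Z^2)\sec^2(\theta_0-\beta t)>0$, and at $t=0$ the identity $\tan\theta_0=\kappa/(\gamma\beta)$ cancels the additive constant $\kappa/(\gamma\sigma_Z^2)$, giving $h(0)=0$.

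The only step that goes beyond routine calculus is, in the third case, the algebraic check that the critical horizon $T_N(\gamma)$ in \eqref{eq:TN} equals the first time at which $\theta_0-\beta t$ hits $-\pi/2$; plugging in the explicit form $\beta=\sigma_Z\|\Sigma^{-1}\alpha\|\sqrt{\gamma_0-\gamma}/\gamma$ reduces this to a one-line identity. Beyond that, the argument rests on elementary monotonicity of $\coth$ and $\tan$, so I anticipate no real obstacle.
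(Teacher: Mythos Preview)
Your case-by-case argument is correct, but it differs from the paper's proof. The paper does not work with the explicit formulas at all: it observes (from the proof of Theorem~\ref{thm:CRRA}) that $h$ solves the Riccati initial-value problem \eqref{ch1eq:Riccati} with $\mathbf{a}=(1/\gamma)\Sigma^{-1}\alpha$, $\mathbf{b}^\top=(1,-c)\Sigma$, and $\xi=1-\gamma>0$, and then invokes Lemma~\ref{lem:RDE}(i), which asserts that for $\xi>0$ the Riccati solution is positive and strictly increasing on its interval of existence. That argument is uniform in the sign of $\mathfrak{D}$ and avoids any branch analysis; the monotonicity is read directly off the ODE $h'=2(\mathbf{a}\cdot\mathbf{b})h+\|\mathbf{b}\|^2h^2+\xi\|\mathbf{a}\|^2$ together with $h(0)=0$. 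Your approach is more hands-on: it establishes the same conclusion from elementary monotonicity of $\coth$ and $\tan$ on the explicit expressions, at the cost of checking three cases and the identification of $T_N(\gamma)$ with the blow-up time of the tangent in the $\mathfrak{D}<0$ branch. Either route is fine here; the ODE route scales better if the explicit formulas become unwieldy, while yours has the merit of being completely self-contained.
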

	\begin{proof}
		From the proof of Theorem \ref{thm:CRRA}, the function $h$ satisfy the Riccati equation \ref{ch1eq:Riccati}, where $\mathbf{a} = (1/\gamma) \Sigma^{-1}\alpha$, $\mathbf{b}^\top=(1,-c)\Sigma$, and $\xi=1-\gamma$. Since, $\xi=1-\gamma>0$, Lemma \ref{lem:RDE}.(i) yields that $h$ is strictly increasing and positive.
	\end{proof}
\end{remark}

Since requirements of the classical verification results are not satisfied, we provide a verification theorem tailored to our control problem.

\begin{theorem}\label{thm:Verify}
	For $T< T_N(\gamma)$, the function $v$ given by \eqref{eq:VF2} coincides with the value function \eqref{eq:VF}. Furthermore, with slight abuse of notation, the optimal investment strategy is given by $\pi^* = \big(\pi^*(Z_t,t)\big)_{t\in[0,T]}$, with the function $\pi^*(\cdot,\cdot)$ given by \eqref{eq:optimalPort}.
\end{theorem}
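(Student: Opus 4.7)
The plan is to run the standard verification program, adapted to handle the fact that the candidate value function $v$ from \eqref{eq:VF2} is unbounded in $z$ and the coefficients of \eqref{ch1eq:wealthSE} fail both Lipschitz and polynomial-growth conditions. By Theorem \ref{thm:CRRA}, $v$ is a classical $C^{2,2,1}$ solution of the HJB equation \eqref{eq:HJB} on $\R^+\times\R\times[0,T]$ whenever $T<T_N(\gamma)$. Applying It\^o's formula to $v(X^\pi_t,Z_t,t)$ along any admissible $\pi\in\mathcal{A}$ and using $\sup_\pi \mathcal{L}^\pi v=0$ shows that the drift of $v(X^\pi_\cdot,Z_\cdot,\cdot)$ is non-positive, with equality if and only if $\pi_t=\pi^\star(Z_t,t)$ given by \eqref{eq:optimalPort}. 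Hence $v(X^\pi_\cdot,Z_\cdot,\cdot)$ is a local supermartingale in general, and a local martingale when $\pi=\pi^\star$.

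To obtain the upper bound $v(x,z,0)\ge \E^{x,z,0}\bigl[\tfrac{(X^\pi_T)^{1-\gamma}-1}{1-\gamma}\bigr]$ for an arbitrary $\pi\in\mathcal{A}$, I would localize with stopping times $\tau_n:=\inf\{t\ge 0: |Z_t|\ge n \text{ or } X^\pi_t\notin[1/n,n]\}\wedge T$, so that the stochastic integrals vanish in expectation and
\[
v(x,z,0)\;\ge\;\E^{x,z,0}\bigl[v(X^\pi_{\tau_n},Z_{\tau_n},\tau_n)\bigr].
\]
Since $0<\gamma<1$ and $\varphi>0$, the candidate $v$ is bounded below by $-1/(1-\gamma)>-\infty$. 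Fatou's lemma applied to the non-negative process $v+1/(1-\gamma)$ then yields the desired inequality upon sending $n\to\infty$, using $\tau_n\to T$ almost surely (a consequence of pathwise continuity and a.s.\ finiteness of $X^\pi$ and $Z$ on $[0,T]$).

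For the reverse inequality I first check that $\pi^\star_t:=\pi^\star(Z_t,t)$ is admissible: since $h$ is continuous on $[0,T]$ by Lemma \ref{lem:Monotone} and $(Z_t)$ has continuous Gaussian paths, condition \eqref{ch1eq:integrability} reduces to $\int_0^T Z_t^2\,dt<\infty$ a.s. The stopped process $v(X^{\pi^\star}_{t\wedge\tau_n},Z_{t\wedge\tau_n},t\wedge\tau_n)$ is a true martingale, hence has constant expectation $v(x,z,0)$. The main obstacle, and the crux of the argument, is to upgrade this to $\E^{x,z,0}[v(X^{\pi^\star}_T,Z_T,T)]=v(x,z,0)$, which requires uniform integrability of $\{v(X^{\pi^\star}_{\tau_n},Z_{\tau_n},\tau_n)\}_n$. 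The key observation is that the ansatz \eqref{eq:ansatz} has a Feynman--Kac interpretation: the PDE \eqref{ch1eq:phiPDE} identifies $\varphi(z,t)$ with an exponential moment of $\int_t^T \widehat Z_s^2\,ds$ under a measure in which $Z$ has mean-reversion rate $\kappa/\gamma$, and finiteness of this quantity on $[0,T]$ is precisely the content of $T<T_N(\gamma)$ (the Gaussian ``explosion'' threshold built into the Riccati equation for $h$). Combining this exponential moment with the explicit stochastic-exponential form of $(X^{\pi^\star}_t)$ and an application of H\"older's inequality with exponent $p$ slightly larger than $1$ bounds $\E^{x,z,0}\bigl[v(X^{\pi^\star}_T,Z_T,T)^p\bigr]$, which provides the required uniform integrability and yields both $v(x,z,0)=\E^{x,z,0}[u(X^{\pi^\star}_T)]$ and the optimality of $\pi^\star$, completing the verification.
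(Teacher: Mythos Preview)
Your overall architecture matches the paper's: localize with stopping times, use $\sup_\pi\mathcal{L}^\pi v=0$ to obtain non-positive drift, take expectations, and pass to the limit. The upper bound via Fatou (exploiting the uniform lower bound $v\ge -1/(1-\gamma)$) and the admissibility check for $\pi^\star$ are handled in the paper exactly as you describe.

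The gap is in your uniform-integrability step. You propose to bound $\E^{x,z,0}\bigl[v(X^{\pi^\star}_T,Z_T,T)^p\bigr]$ for some $p>1$ and assert that this ``provides the required uniform integrability''. It does not: what you need is UI of the \emph{stopped} family $\{v(X^{\pi^\star}_{\tau_n},Z_{\tau_n},\tau_n)\}_{n\ge1}$, and an $L^p$ bound at the single terminal time $T$ says nothing about $\E\bigl[|v(X^{\pi^\star}_{\tau_n},Z_{\tau_n},\tau_n)|^p\bigr]$ on the event $\{\tau_n<T\}$. A non-negative local martingale can be a strict supermartingale even when its terminal value lies in $L^p$, so this step cannot be finessed. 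The paper closes the gap by proving the uniform-in-time estimate $\sup_{0\le t\le T}\E\bigl[|v(X^{\pi^\star}_t,Z_t,t)|^p\bigr]<\infty$: it writes $|v(X^{\pi^\star}_t,Z_t,t)|^p$ explicitly as the product of a factor $\exp\bigl(K_2\int_0^t Z_u\,du\bigr)$ (whose exponent is Gaussian with variance bounded by $\sigma_Z^2 T/(2\kappa)$) and a Dol\'eans--Dade exponential driven by $Z$, and then invokes a bound of \cite{Krylov1980} on the expected running supremum of the latter.

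A related looseness: your Feynman--Kac representation of $\varphi$ lives under an auxiliary law in which $Z$ has mean-reversion rate $\kappa/\gamma$, whereas the moment you must bound is taken under $\PP$, where the rate is $\kappa$. Before the ``$p$ slightly larger than $1$'' continuity argument can be applied, you would need to make the Girsanov change of measure explicit and control the resulting Radon--Nikodym density, which itself has exponential-quadratic dependence on $Z$. The underlying heuristic is sound---the Riccati escape time $T_N(\gamma)$ depends continuously on its coefficients, so finiteness persists for $p$ near $1$---but both the measure change and the uniform-in-$t$ control are missing from your outline.
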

\begin{proof}
See Appendix \ref{app:Verify}
\end{proof}
\begin{remark}
	Herein, we take the stochastic control approach to solve the optimal investment problem. Alternatively, one may use the duality approach which, since the market model is complete, boils down to the so-called ``\emph{martingale method}'' of \cite{karatzasEtAl1987} and  \cite{cox1989}. However, since the investor's opportunity set is stochastic, one has to check extra conditions regarding the finiteness of the value function and the moments of the state price density, see \cite{cox1991} and \cite{dybvig1999portfolio}. These extra steps makes the arguments of the martingale method as complex as the one for the stochastic control approach.
\end{remark}

\section{Well-posedness condition and optimality of M-N pairs-trading}\label{sec:WellPosed}
Our findings in the previous section reveal two unsatisfactory characteristic of the investment model of Section \ref{sec:Market}. Firstly, the investment model cannot represent the \emph{equilibrium} price of a traded asset if $\gamma_0>0$.\footnote{Note that the optimal investment problem in Section \ref{sec:Optimal} is a ``\emph{partial equilibrium}'' model, which implies that $(S^1_t,S^2_t)$ corresponds to assets that are traded in a market in its ``\emph{equilibrium}'' state.} Indeed, CRRA investors with $\gamma\in(0,\gamma_0)$ achieve infinite expected utility in the finite investment horizon $T_N(\gamma)$ given by \eqref{eq:TN}. If $\gamma_0>0$, then $T_N(\gamma)\to0$ as $\gamma\to0$. Thus, one can always find CRRA investors that achieve infinite expected utility, regardless of how short the investment horizon is. As pointed out by \cite{KimOmberg1996} and \cite{KornKraft2004}, such investors cannot exist in a partial equilibrium model, since they would have pushed the market out of equilibrium by aggressively investing in the assets. 

Secondly, as one can easily check, the optimal investment strategy $\big(\pi^*(Z_t,t)\big)_{t\ge0}$ does not satisfy \eqref{ch1eq:MN} and, thus, is not M-N. This, contradicts the industry practice as explained in Section \ref{sec:MN}.

To exclude these unsatisfactory characteristic of the investment model, we should restrict the market parameters such that $\gamma_0=0$ and that the optimal strategy satisfy \eqref{ch1eq:MN}. Surprisingly, these two seemingly different requirements lead to the same condition, which we introduce next.

\begin{condition}[\emph{well-posedness}]\label{ch1cond:MNcondition}
The following equivalent relationships hold between the market parameters:
\begin{enumerate}
  \item[(i)] $\alpha_1 / \alpha_2 = (\sigma_1^2 - c \sigma_1 \sigma_2 \rho) / (\sigma_1 \sigma_2 \rho - c \sigma_2^2)$.
  
  \item[(ii)] There exists $\xi\in\mathbb{R}$ such that $\alpha = \Sigma \Sigma^\top (1,-c)^\top \xi$.
  
  \item[(iii)] $\alpha = \Sigma \Sigma^\top (1,-c)^\top (-\kappa / \sigma_Z^2)$.
\end{enumerate}
\end{condition}
\vskip1ex

\begin{remark}
	the relationships $(i) \Leftrightarrow (ii)$ and $(iii) \Rightarrow (ii)$ are trivial. To see $(ii)\Rightarrow(iii)$, left-multiply (ii) by $\left(1,-c\right)$ to obtain
	\[
	-\kappa = (1,-c)\alpha = (1,-c)\Sigma\Sigma^\top (1,-c)^\top \xi = \sigma_Z^2 \xi,
	\]
	which yields $\xi = (-\kappa/\sigma_Z^2)$.
\end{remark}

The following theorem is the main result of this paper. It characterizes the central role of Condition \ref{ch1cond:MNcondition} as the condition needed for well-posedness of the optimal investment model as well as market-neutrality of the optimal strategy.

\begin{theorem}\label{ch1thm:MNcondition}
Condition \ref{ch1cond:MNcondition} is equivalent to either of the following statements.
\begin{enumerate}
\item[(i)] For all $\gamma \in (0,1)$, the optimal investment problem (\ref{eq:Merton}) is well-posed, i.e. the maximal expected utility is finite for all investment horizon $T>0$.

\item[(ii)] The optimal strategy $\big(\pi^*(Z_t,t)\big)_{t\in[0,t]}$  is M-N, where $\pi^*$ is given by \eqref{eq:optimalPort}.
\end{enumerate}
\end{theorem}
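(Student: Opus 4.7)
The plan is to prove the two equivalences Condition \ref{ch1cond:MNcondition} $\Leftrightarrow$ (i) and Condition \ref{ch1cond:MNcondition} $\Leftrightarrow$ (ii) separately, each as a direct algebraic reading of the explicit formulas in Section \ref{sec:Optimal}. The common thread is that both statements collapse to the equality case of the Cauchy--Schwarz inequality applied to the two vectors $\Sigma^{\top}(1,-c)^{\top}$ and $\Sigma^{-1}\alpha$ in $\mathbb{R}^{2}$.

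For Condition $\Leftrightarrow$ (i): Theorem \ref{thm:CRRA} gives well-posedness for every $T<T_N(\gamma)$, and by \eqref{eq:Nirvana} ill-posedness as $T\uparrow T_N(\gamma)$ when $0<\gamma<\gamma_0$. Hence well-posedness for all $T>0$ at a fixed $\gamma$ is equivalent to $T_N(\gamma)=+\infty$, i.e.\ $\gamma\ge\gamma_0$. Requiring this for every $\gamma\in(0,1)$ forces $\gamma_0\le0$; combined with $\gamma_0\ge0$ noted after \eqref{ch1eq:gamma0}, this is $\gamma_0=0$. By \eqref{ch1eq:gamma0}, this means $\kappa=\|(1,-c)\Sigma\|\,\|\Sigma^{-1}\alpha\|$. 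Writing $\kappa=-(1,-c)\alpha=-\bigl((1,-c)\Sigma\bigr)\cdot\bigl(\Sigma^{-1}\alpha\bigr)$ as an inner product in $\mathbb{R}^{2}$, this is the equality case of Cauchy--Schwarz, which holds iff $\Sigma^{-1}\alpha$ is a scalar multiple of $\Sigma^{\top}(1,-c)^{\top}$, i.e.\ $\alpha=\xi\,\Sigma\Sigma^{\top}(1,-c)^{\top}$ for some $\xi\in\mathbb{R}$. That is Condition \ref{ch1cond:MNcondition}(ii).

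For Condition $\Leftrightarrow$ (ii): From \eqref{eq:optimalPort},
\[
\pi^{\star}(z,t)=\Bigl[\tfrac{1}{\gamma}(\Sigma\Sigma^{\top})^{-1}\alpha+h(T-t)(1,-c)^{\top}\Bigr]z.
\]
The second summand already satisfies \eqref{ch1eq:MN} identically, so imposing $\pi^{\star,2}(z,t)=-c\,\pi^{\star,1}(z,t)$ pointwise in $(z,t)$ reduces to the $t$-independent identity $[(\Sigma\Sigma^{\top})^{-1}\alpha]_{2}=-c\,[(\Sigma\Sigma^{\top})^{-1}\alpha]_{1}$. Equivalently, $(\Sigma\Sigma^{\top})^{-1}\alpha$ is parallel to $(1,-c)^{\top}$, so $\alpha=\xi\,\Sigma\Sigma^{\top}(1,-c)^{\top}$ for some $\xi$, which is again Condition \ref{ch1cond:MNcondition}(ii). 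Market-neutrality in the sense of Definition \ref{def:MN} then follows automatically: $\pi^{\star}=\pi^{\star}(Z_{t},t)$ is a deterministic function of $(Z_{t},t)$, hence $(\mathcal{F}_{t}^{Z})$-adapted, and by Lemma \ref{lem:MN} the relation \eqref{ch1eq:MN} yields $(\mathcal{F}_{t}^{Z})$-adaptedness of $X^{\pi^{\star}}$.

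There is essentially no analytic obstacle: the theorem is a two-line consequence of the structure of $\gamma_0$ in \eqref{ch1eq:gamma0} and of $\pi^{\star}$ in \eqref{eq:optimalPort}. The only care needed is (a) the universal quantifier on $\gamma$ in (i), which forces the boundary case $\gamma_0=0$ rather than merely $\gamma_0\le\gamma$, and (b) the observation that the $h(T-t)$ term of $\pi^{\star}$ already points in the market-neutral direction $(1,-c)^{\top}$ and therefore plays no role in (ii); the condition must be borne entirely by the myopic Merton component $(\Sigma\Sigma^{\top})^{-1}\alpha$. The strict monotonicity of $h$ from Lemma \ref{lem:Monotone} is incidentally useful only to rule out the degenerate possibility $h\equiv0$ when checking that the pointwise-in-$t$ argument above is non-vacuous.
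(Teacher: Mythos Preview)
Your proof is correct and follows essentially the same route as the paper's own argument: both reduce (i) to $\gamma_0=0$ via the blow-up of the value function at $T_N(\gamma)$ and then recognize $\gamma_0=0$ as the equality case of Cauchy--Schwarz for the vectors $\Sigma^{\top}(1,-c)^{\top}$ and $\Sigma^{-1}\alpha$, and both reduce (ii) to the observation that the $h(T-t)(1,-c)^{\top}$ term of $\pi^{\star}$ is already market-neutral so the condition falls entirely on the myopic component $(\Sigma\Sigma^{\top})^{-1}\alpha$. If anything, your write-up is slightly more explicit than the paper's (you spell out the monotonicity-in-$T$ step needed to pass from \eqref{eq:Nirvana} to actual ill-posedness of the Merton problem, and you invoke Lemma~\ref{lem:MN} to close the loop with Definition~\ref{def:MN}).
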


\begin{proof}
	The equivalence with (ii) is straightforward, since $\big(\pi^*(Z_t,t)\big)_{t\in[0,t]}$ satisfies \eqref{ch1eq:MN} if and only if \ref{ch1cond:MNcondition}.(ii) holds. To show the equivalence with (i), note that by Theorem \ref{thm:Verify}, the optimal investment problem is well posed for all $\gamma\in(0,1)$ if and only if $\gamma_0=0$. From \eqref{ch1eq:gamma0}, it follows that $\gamma_0=0$ if and only if
\[
(1,-c) \Sigma \lambda = \|(1,-c) \Sigma\| \|\lambda\|.
\]
This equation is equivalent to the linear dependence of $\Sigma^\top (1,-c)^\top$ and $\lambda$ which is, in turn, equivalent to Condition \ref{ch1cond:MNcondition}.(ii).
\end{proof}
\vskip1ex

Theorem \ref{ch1thm:MNcondition} provides economic viability for the assumption that the optimal pairs-trading strategy is M-N. Indeed, real investors neither attain infinite expected utility nor take infinite positions. The implications of the possibility of attaining infinite expected utility are therefore that the parameter combinations producing such a scenario do not occur in the real world. This means that either
\begin{itemize}
	\item[(i)] $\gamma_0>0$ and there is no investor with $\gamma<\gamma_0$; or,
	
	\item[(ii)] Condition \ref{ch1cond:MNcondition} holds (i.e. $\gamma_0=0$).
\end{itemize}
Since investor's with $\gamma<\gamma_0$ are risk-averse agent's, there is no strong reason to exclude them. It then follows that Condition \ref{ch1cond:MNcondition} must hold, which, in turn, implies that the optimal investment strategy is M-N.

We end the paper by a brief discussion on the optimal strategies under Condition \ref{ch1cond:MNcondition}. By Theorem \ref{ch1thm:MNcondition}, the Merton problem is always well-posed, and imposing Condition \ref{ch1cond:MNcondition}.(iii) on \eqref{eq:optimalPort} yields the following result.

\begin{proposition}\label{ch1prop:Optimal_MN}
Under Condition \ref{ch1cond:MNcondition}, the optimal strategies is given by:
\begin{equation}\label{ch1eq:OptimalPortfolio_MN}
\pi^\star_t =
\left(\frac{-\kappa}{\sigma_Z^2}\right) \frac{1 + 1/\sqrt{\gamma} \coth\left( \frac{\kappa} {\sqrt{\gamma}}(T-t)\right)} {1 + \sqrt{\gamma} \coth\left( \frac{\kappa} {\sqrt{\gamma}}(T-t)\right)} Z_t
\begin{pmatrix}
 1\\
 -c
 \end{pmatrix};\quad t\in[0,T],
\end{equation}
for all $(\gamma,T) \in (0,1)\times (0,+\infty)$.
\end{proposition}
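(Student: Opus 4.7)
The plan is to start from the general optimal portfolio formula \eqref{eq:optimalPort} derived in Theorem \ref{thm:CRRA},
\[
\pi^\star(z,t) = \Big[\frac{1}{\gamma} (\Sigma\Sigma^\top)^{-1}\alpha + h(T-t) \begin{pmatrix} 1\\ -c \end{pmatrix}\Big] z,
\]
and specialize it using Condition \ref{ch1cond:MNcondition}. The first observation is that Condition \ref{ch1cond:MNcondition}.(iii) gives $(\Sigma\Sigma^\top)^{-1}\alpha = (-\kappa/\sigma_Z^2) (1,-c)^\top$, so the Merton (myopic) term becomes automatically proportional to $(1,-c)^\top$, and the whole strategy collapses to a scalar multiple of $(1,-c)^\top z$. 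In particular, market-neutrality in the sense of Definition \ref{ch1def:PairsTrading} is immediate, and what remains is to identify this scalar in closed form.

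Next, I would simplify the quantities appearing in $h$. From Condition \ref{ch1cond:MNcondition}.(iii), $\Sigma^{-1}\alpha = (-\kappa/\sigma_Z^2)\Sigma^\top(1,-c)^\top$, so
\[
\|\Sigma^{-1}\alpha\|^2 = \frac{\kappa^2}{\sigma_Z^4}\,(1,-c)\Sigma\Sigma^\top(1,-c)^\top = \frac{\kappa^2}{\sigma_Z^2},
\]
using the definition of $\sigma_Z^2$. Plugging this into \eqref{ch1eq:gamma0} confirms $\gamma_0 = 0$ (consistent with Theorem \ref{ch1thm:MNcondition}), and then the discriminant in \eqref{ch1eq:MetonEscapDisc} simplifies to $\mathfrak{D} = \kappa^2/\gamma > 0$, giving $\sqrt{\mathfrak{D}} = \kappa/\sqrt{\gamma}$. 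Since $\gamma_0 < \gamma < 1$ for every $\gamma\in(0,1)$, we use the first branch of \eqref{eq:h}, which after substituting the above reduces to
\[
h(T-t) = \frac{(1-\gamma)\kappa/\sigma_Z^2}{\gamma\bigl(1 + \sqrt{\gamma}\coth(\kappa(T-t)/\sqrt{\gamma})\bigr)}.
\]

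Finally, I would combine the myopic and hedging coefficients. Writing $A := \coth(\kappa(T-t)/\sqrt{\gamma})$, the scalar multiplying $(1,-c)^\top z$ in $\pi^\star(Z_t,t)$ equals
\[
\frac{-\kappa}{\gamma \sigma_Z^2} + h(T-t) = \frac{\kappa}{\gamma \sigma_Z^2}\cdot\frac{-(1+\sqrt{\gamma}A)+(1-\gamma)}{1+\sqrt{\gamma}A} = \frac{-\kappa}{\sigma_Z^2}\cdot\frac{1 + A/\sqrt{\gamma}}{1+\sqrt{\gamma}A},
\]
after factoring $-\sqrt{\gamma}$ from the numerator. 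This is precisely \eqref{ch1eq:OptimalPortfolio_MN}. There is no real obstacle here; the content is entirely algebraic, and the only care required is keeping track of the simplifications induced by Condition \ref{ch1cond:MNcondition}.(iii) so that the hyperbolic-cotangent expression in $h$ assumes the clean form displayed in the proposition.
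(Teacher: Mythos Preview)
Your proof is correct and follows exactly the approach the paper indicates: substitute Condition~\ref{ch1cond:MNcondition}.(iii) into the general optimal portfolio \eqref{eq:optimalPort}, simplify $\|\Sigma^{-1}\alpha\|^2$ and $\mathfrak{D}$ accordingly, and reduce the resulting scalar in front of $(1,-c)^\top z$. The paper merely states that the result follows from this substitution without spelling out the algebra, so your writeup is in fact more detailed than the original.
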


The form of the optimal strategy (\ref{ch1eq:OptimalPortfolio_MN}) is quite intuitive. Note that $(Z_t)$ quantify the relative mispricing between $S^1$ and $S^2$. In particular, assuming $c>0$, if $Z_t>0$ (resp. $Z_t<0$), then $S^1$ (resp. $S^2$) is over priced relative to the other stock. Since
\[
\left(\frac{-\kappa}{\sigma_Z^2}\right) \frac{1 + 1/\sqrt{\gamma} \coth\left( \frac{\kappa} {\sqrt{\gamma}}(T-t)\right)} {1 + \sqrt{\gamma} \coth\left( \frac{\kappa} {\sqrt{\gamma}}(T-t)\right)} < 0,
\]
the optimal strategy always shorts the over-priced stock and longs the under-priced one. Furthermore, the factor $-\kappa/\sigma_{z}^{2}$ tells us that the long-short positions should be bigger if the mean-reversion rate $\kappa$ is bigger, and they should be smaller if the variance rate of the residual, $\sigma_{z}^{2}$, is larger.

\section{Conclusion}\label{sec:conclude}
We considered the problem of optimal investment in a market with two cointegrated risky assets, with the motivation of finding a theoretical ground for market-neutrality of the so-called pairs-trading strategies. For this, we formulated the classical Merton problem of expected utility of terminal wealth and investigated whether this model supports, in terms of optimal choice, market-neutral pairs-trading strategies. 

We focused on the class of CRRA utilities, a model that has been studied by \cite{LiuTimmermann2012}. We found that such models might have abnormal properties, that is for some values of the risk aversion parameter, the investor attains infinite expected utility in finite investment horizon. Since such investors cannot exist in a partial equilibrium model, we identified an extra condition on the market coefficients, i.e. Condition \ref{ch1cond:MNcondition}, which eliminates the possibility of attaining infinite expected utility. Finally, we showed that Condition \ref{ch1cond:MNcondition} is equivalent to assuming that the optimal strategy is market-neutral and, hence, achieved our main goal of providing theoretical justification for the investment practice of market-neutral pairs-trading.  

\appendix

\section{Auxiliary PDE}\label{ch1app:2ndPDE}
This section provides the explicit solutions for the auxiliary Cauchy problem
\begin{equation}\label{ch1eq:2ndPDE}
\begin{cases}	
\varphi_t + z (\mathbf{a}\cdot\mathbf{b}) \varphi_z + \frac{1}{2}
\|\mathbf{b}\|^2 \varphi_{zz} + \frac{\xi}{2} z^2 \|\mathbf{a}\|^2 \varphi = 0;
&\quad (z,t) \in \R\times[0,T),\\
\varphi(z,T)=1;&\quad z\in\R,
\end{cases}
\end{equation}
and the related Riccati differential equation
\begin{equation}\label{ch1eq:Riccati}
\begin{cases}
	h^{\prime}(t) = 2 (\mathbf{a} \cdot \mathbf{b}) h(t) + \|\mathbf{b}\|^2 h^2(t)
	+ \xi \|\mathbf{a}\|^2; \quad t\in [0,T),&\quad t\in(0,T]\\
	h(0)=0.
\end{cases}
\end{equation}
It is assumed throughout that $\mathbf{a}, \mathbf{b} \in \mathbb{R}^2$, $\mathbf{a}\cdot\mathbf{b} <0$, $\xi \in
\mathbb{R}\backslash \{0\}$, and $T>0$.

Following the terminology of \cite{Sasagawa1982}, we define the ``\emph{escape criterion discriminant}''
\begin{equation}\label{ch1eq:EscapeDisc}
\mathfrak{D} := (\mathbf{a}\cdot\mathbf{b})^2 - \xi
\|\mathbf{a}\|^2 \|\mathbf{b}\|^2,
\end{equation}
and the ``\emph{escape time}'' $T_{esc}\in(0,+\infty]$ by
\begin{equation}\label{ch1eq:escapeTime}
    T_{\text{esc}}:=
	\begin{cases}
		+\infty;	&\quad\D\ge0,\\
	\frac{1}{\sqrt{-\mathfrak{D}}} \bigg( \frac{\pi}{2}
    + \arctan \Big( \frac{-\mathbf{a} \cdot
    \mathbf{b}}{\sqrt{-\mathfrak{D}}} \Big) \bigg);
	&\quad \D<0,	
	\end{cases}
\end{equation}
For $t<T_{esc}$, we also introduce the auxiliary functions
\begin{equation}\label{eq:gAux}
 g(t) = \frac{\kappa}{2\gamma} t - \frac{1}{2}
	\begin{cases}
    \log \Big( \cosh\big( t\sqrt{|\D|} \big) +\frac{\kappa}{\gamma\sqrt{|\D|}} \sinh\big(
    t\sqrt{|\D|} \big) \Big);
    &\text{if }\gamma\ne\gamma_0,\\
    \vphantom{\Bigg()}
    \log \Big( 1 + \frac{\kappa}{\gamma} t \Big) ; &\text{if
    }\gamma=\gamma_0,
	\end{cases}
\end{equation}
and
\begin{equation}\label{eq:hAux}
    h(t) =
	\begin{cases}
    \frac{\displaystyle\vphantom{\Big(} \xi \|\mathbf{a}\|^2}
    {\displaystyle\vphantom{\bigg(} -\mathbf{a} \cdot \mathbf{b} +
    \sqrt{\mathfrak{D}} \coth \Big( t\sqrt{\mathfrak{D}} \Big) };
    &\quad \text{if }\D>0,\\
    \frac{\displaystyle\vphantom{\big(} \mathbf{a}\cdot\mathbf{b}}
    {\displaystyle\vphantom{\Big(} \|\mathbf{b}\|^2}\left(
    \frac{\displaystyle\vphantom{\big(}1}
    {\displaystyle\vphantom{\Big(} 1 - (\mathbf{a}\cdot\mathbf{b}) t}
    - 1\right); &\quad \text{if }\D=0,\\
	-\frac{\sqrt{-\mathfrak{D}}}
    {\|\mathbf{b}\|^2}\tan\bigg( \arctan \Big( \frac{-\mathbf{a} \cdot
    \mathbf{b}} {\sqrt{-\mathfrak{D}}}\Big) -
    \sqrt{-\mathfrak{D}}t \bigg) - \frac{\mathbf{a} \cdot \mathbf{b}}
    {\|\mathbf{b}\|^2};
	&\quad \text{if }\D<0.
	\end{cases}
\end{equation}

The following Lemmas provide the solutions of \eqref{ch1eq:2ndPDE} and \eqref{ch1eq:Riccati} as well as some of their properties. Note that the solutions are defined up to $T_{esc}$. In particular, for $\D<0$, the solutions ``\emph{blow up}'' at the finite escape time $T_{esc}$. The proof of the lemmas are by direct substitution and omitted.

\begin{lemma}\label{lem:RDE}
For $T<T_{esc}$, the solution of the Riccati equation \eqref{ch1eq:Riccati} is given by $h(t)$. Furthermore:
\begin{enumerate}
	\item[(i)] If $\xi > 0$ (resp. $\xi < 0$), then $h$ is positive and strictly increasing (resp. negative and strictly decreasing).
	
	\item[(ii)] If $\D\ge0$ (resp. $\D<0$), then$\displaystyle \lim_{t\to+\infty} h_(t) = \frac{ \xi \|\mathbf{a}\|^2}
     {\sqrt{\mathfrak{D}} -\mathbf{a} \cdot \mathbf{b}}$ (resp. $\displaystyle\lim_{t\to T_{esc}^-} h(t)= +\infty$).
\end{enumerate}
\end{lemma}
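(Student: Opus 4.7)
My plan is to (i) verify the three branches of $h$ by direct substitution into \eqref{ch1eq:Riccati}, as the paper suggests, but to first derive them cleanly via the standard Riccati linearization; and (ii) obtain the sign, monotonicity, and asymptotics from a phase-line analysis of the quadratic right-hand side of the ODE.

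\emph{Step 1 (finding/verifying the formulas).} I would apply the substitution $h(t)=-u'(t)/(\|\mathbf{b}\|^2 u(t))$, which is the usual trick for converting a quadratic Riccati equation into a linear second-order one. A short computation reduces \eqref{ch1eq:Riccati} to
\[
u''-2(\mathbf{a}\cdot\mathbf{b})\,u'+\xi\|\mathbf{a}\|^2\|\mathbf{b}\|^2\,u=0,\qquad u(0)=1,\;u'(0)=0,
\]
whose characteristic roots are $r_\pm=\mathbf{a}\cdot\mathbf{b}\pm\sqrt{\mathfrak{D}}$. The three cases $\mathfrak{D}>0$, $\mathfrak{D}=0$, $\mathfrak{D}<0$ produce, respectively, hyperbolic, polynomial-times-exponential, and damped-oscillatory solutions for $u$; differentiating $-u'/(\|\mathbf{b}\|^2 u)$ yields the $\coth$, rational, and $\tan$ branches of \eqref{eq:hAux}. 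The escape time $T_{\mathrm{esc}}$ from \eqref{ch1eq:escapeTime} can be read off as the first positive zero of $u$ in the oscillatory case; a direct check shows that for $\mathfrak{D}\ge 0$ the function $u$ has no positive zero, so $h$ is globally defined there.

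\emph{Step 2 (sign and monotonicity, part (i)).} Set $p(y):=\|\mathbf{b}\|^2 y^2+2(\mathbf{a}\cdot\mathbf{b})\,y+\xi\|\mathbf{a}\|^2$, so $h'=p(h)$ and the discriminant of $p$ is $4\mathfrak{D}$. Whenever the roots are real they have sum $-2(\mathbf{a}\cdot\mathbf{b})/\|\mathbf{b}\|^2>0$ (using $\mathbf{a}\cdot\mathbf{b}<0$) and product $\xi\|\mathbf{a}\|^2/\|\mathbf{b}\|^2$. For $\xi>0$: if $\mathfrak{D}<0$ then $p(y)>0$ for every $y$, so $h$ is strictly increasing on $[0,T_{\mathrm{esc}})$, and strictly positive since $h(0)=0$ and $h'(0)=\xi\|\mathbf{a}\|^2>0$; if $\mathfrak{D}\ge0$ both roots $0<y_-\le y_+$ are positive, and since $h(0)=0<y_-$ with $y_-$ stable from below, the trajectory is trapped in $[0,y_-)$ and is therefore strictly positive and strictly increasing. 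For $\xi<0$ the discriminant $\mathfrak{D}>0$ automatically, the roots satisfy $y_-<0<y_+$, $p<0$ on $(y_-,y_+)$, and the symmetric argument traps $h$ in $(y_-,0]$ as strictly negative and strictly decreasing.

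\emph{Step 3 (asymptotics, part (ii)).} The limits follow by inspection of \eqref{eq:hAux}. When $\mathfrak{D}>0$, $\coth(t\sqrt{\mathfrak{D}})\to 1$ as $t\to\infty$, giving $h(t)\to \xi\|\mathbf{a}\|^2/(\sqrt{\mathfrak{D}}-\mathbf{a}\cdot\mathbf{b})$. When $\mathfrak{D}=0$, $1/(1-(\mathbf{a}\cdot\mathbf{b})t)\to 0$ yields $h(t)\to -\mathbf{a}\cdot\mathbf{b}/\|\mathbf{b}\|^2$; this agrees with the stated formula because $\mathfrak{D}=0$ is equivalent to $\xi\|\mathbf{a}\|^2=(\mathbf{a}\cdot\mathbf{b})^2/\|\mathbf{b}\|^2$. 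When $\mathfrak{D}<0$, the argument of $\tan$ in \eqref{eq:hAux} decreases from $\arctan(-(\mathbf{a}\cdot\mathbf{b})/\sqrt{-\mathfrak{D}})\in(0,\pi/2)$ to $-\pi/2$ as $t\uparrow T_{\mathrm{esc}}$, so $\tan(\cdot)\to-\infty$, and the leading minus sign forces $h(t)\to+\infty$. The main obstacle is purely bookkeeping (matching the three explicit branches at the degenerate boundary $\mathfrak{D}=0$ and correctly identifying which root of $p$ is the stable attractor given the signs of $\mathbf{a}\cdot\mathbf{b}$ and $\xi$); no substantive analytic difficulty is expected.
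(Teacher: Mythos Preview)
Your proposal is correct and complete. The paper itself offers no proof of this lemma beyond the sentence ``The proof of the lemmas are by direct substitution and omitted,'' so your argument is strictly more informative: you not only verify the formulas \eqref{eq:hAux} (via the standard linearization $h=-u'/(\|\mathbf{b}\|^2 u)$, which is cleaner than blind substitution into the three branches) but also supply the qualitative properties (i) and (ii) through a phase-line analysis of $p(y)=\|\mathbf{b}\|^2 y^2+2(\mathbf{a}\cdot\mathbf{b})y+\xi\|\mathbf{a}\|^2$, which the paper does not spell out at all. The only cosmetic point is that ``positive'' in part (i) should be read as ``positive for $t>0$,'' since $h(0)=0$; your argument already handles this correctly.
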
\vspace{1em}

\begin{lemma}\label{lem:AuxPDE}
For $T<T_{esc}$, the solution of the Cauchy problem \eqref{ch1eq:2ndPDE} is given by
\begin{equation}\label{eq:AuxPDESol}
	\varphi(z,t) = \exp\Big( g(T-t) + \frac{1}{2} h(T-t) z^2\Big);\quad (z,t)\in\R\times[0,T].
\end{equation}
In particular, if $\D<0$, then $\lim_{T\to T_{esc}^-} \varphi(z,0) = +\infty$, for all $z>0$.
\end{lemma}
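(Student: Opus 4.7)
My plan is to verify the Gaussian ansatz \eqref{eq:AuxPDESol} by direct substitution, which reduces the Cauchy problem \eqref{ch1eq:2ndPDE} to the Riccati equation \eqref{ch1eq:Riccati} (already solved in Lemma \ref{lem:RDE}) coupled with a linear ODE for $g$. Setting $\tau = T-t$ and writing $\varphi = e^{g(\tau)+\frac{1}{2}h(\tau)z^2}$, one computes
\[
\varphi_t = -\varphi\bigl(g'(\tau) + \tfrac{1}{2} h'(\tau) z^2\bigr),\quad \varphi_z = \varphi\, h(\tau)\, z,\quad \varphi_{zz} = \varphi\bigl(h(\tau) + h(\tau)^2 z^2\bigr).
\]
Substituting into \eqref{ch1eq:2ndPDE}, dividing by $\varphi$, and collecting powers of $z$ yields two ODEs: matching the $z^2$ coefficient gives exactly \eqref{ch1eq:Riccati}, namely $h'(\tau) = 2(\mathbf{a}\cdot\mathbf{b}) h(\tau) + \|\mathbf{b}\|^2 h(\tau)^2 + \xi\|\mathbf{a}\|^2$, while matching the $z^0$ coefficient gives the simple linear ODE $g'(\tau) = \tfrac{1}{2}\|\mathbf{b}\|^2 h(\tau)$. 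The terminal condition $\varphi(z,T)=1$ translates to $g(0) = h(0) = 0$.

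The Riccati piece is immediately handled by Lemma \ref{lem:RDE}, which delivers the closed-form $h$ in each of the three discriminant regimes. For $g$, I would integrate $g(\tau) = \tfrac{1}{2}\|\mathbf{b}\|^2 \int_0^\tau h(s)\,ds$ directly; equivalently, using the standard Riccati linearization $h = -\frac{2\phi'}{\|\mathbf{b}\|^2 \phi}$ for an appropriate $\phi$, one gets $g(\tau) = -\log \phi(\tau) + \text{const}$, with the constant fixed by $g(0)=0$. Performing this in each case ($\D>0$, $\D=0$, $\D<0$) gives the piecewise expression \eqref{eq:gAux}; this is the routine but lengthiest step and the one most prone to bookkeeping errors in the hyperbolic/trigonometric identities, but it is purely mechanical.

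For the blow-up claim, note that $\D<0$ forces $\xi>0$ (since $(\mathbf{a}\cdot\mathbf{b})^2 \ge 0 < \xi\|\mathbf{a}\|^2\|\mathbf{b}\|^2$ is required). Lemma \ref{lem:RDE}(i)-(ii) then gives $h>0$, $h$ strictly increasing, and $h(\tau)\to+\infty$ as $\tau\to T_{\mathrm{esc}}^-$. Since $g'(\tau) = \tfrac{1}{2}\|\mathbf{b}\|^2 h(\tau) > 0$, the function $g$ is also increasing, hence bounded below by $g(0)=0$ on $[0,T_{\mathrm{esc}})$. Therefore, for any $z \ne 0$,
\[
\varphi(z,0) = \exp\!\Bigl(g(T) + \tfrac{1}{2} h(T) z^2\Bigr) \ge \exp\!\Bigl(\tfrac{1}{2} h(T) z^2\Bigr) \longrightarrow +\infty \quad \text{as } T\to T_{\mathrm{esc}}^-,
\]
yielding the second part of the lemma.

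The substitution above shows existence of a classical solution on $\R\times[0,T]$ with the claimed form; the main non-routine issue is uniqueness, since the zeroth-order coefficient $\tfrac{\xi}{2}\|\mathbf{a}\|^2 z^2$ grows quadratically and classical maximum-principle arguments require controlling growth at infinity in $z$. I would invoke a standard uniqueness result for linear parabolic Cauchy problems under an exponential-quadratic growth condition (e.g.\ a Tychonoff-type class), which is satisfied by $\varphi$ on $[0,T]\times\R$ precisely because $T<T_{\mathrm{esc}}$ keeps $h$ finite; this is the only subtlety and accounts for why the author writes only that the proof is by direct substitution.
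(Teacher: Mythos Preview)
Your proposal is correct and follows exactly the approach the paper intends: the paper states only that the proof is ``by direct substitution and omitted,'' and your computation---reducing to the Riccati equation \eqref{ch1eq:Riccati} for the $z^2$ coefficient and the linear ODE $g'=\tfrac{1}{2}\|\mathbf{b}\|^2 h$ for the $z^0$ coefficient---is precisely that substitution. Your blow-up argument via Lemma~\ref{lem:RDE} is also what the paper has in mind, and your remark on uniqueness (Tychonoff-type growth class) is a genuine point the paper simply does not address.
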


\section{Proof of Theorem \ref{thm:Verify}}\label{app:Verify}

	It is sufficient to show the following statements.
	\begin{enumerate}
		\item[(i)] For any admissible strategy $\pi=(\pi^1_t,\pi^2_t)\in\mathcal{A}$, one has
		\begin{equation}\label{eq:Verification}
			v(x,z,t)\ge\E^{x,z,t}\frac{(X^\pi_T)^{1-\gamma} -1}{1-\gamma};\quad (x,z,t)\in\R^+\times\R\times[0,T].
		\end{equation}
	
		\item[(ii)] $\pi^*\in\mathcal{A}$ and
		\begin{equation}\label{eq:Verification2}
			v(x,z,t) = \E^{x,z,t}\frac{(X^*_T)^{1-\gamma} -1}{1-\gamma};\quad (x,z,t)\in\R^+\times\R\times[0,T],
		\end{equation}
		where we defined $X^*=X^{\pi^*}$.
	\end{enumerate}
	\vskip 1ex
	\noindent\textbf{Proof of (i):} For $n>0$, define the stopping time
	\begin{equation}
		\tau_n := T\wedge\inf\Big\{t\ge0: \max\{\int_0^t \pi_u^2 du, \vert X^\pi_t \vert, \vert Z_t \vert\} > n \Big\}\bigg\}.
	\end{equation}
	Note that $\tau_n \to T$ a.s. as $n\to+\infty$. Applying It\^o's rule yields
	\begin{equation}\label{eq:ItoExpansion}
	\begin{split}	
		v(X^\pi_{\tau_n},Z_{\tau_n},\tau_n) &= v(t,x,z) + \int_t^{\tau_n} \mathcal{L}^{\pi_u} v(X^\pi_u,Z_u,u) du \\
		&\quad + \int_t^{\tau_n} v_z(X^\pi_u,Z_u,u) \sigma_Z  dW^Z_u + \int_t^{\tau_n} v_x(X^\pi_u,Z_u,u) X^\pi_u \pi^\top_u \Sigma dW_u.
	\end{split}
	\end{equation}
	The first integral on the right side is non-positive because $v$ solves the HJB equation \eqref{eq:HJB}. Furthermore, by the definition of $\tau_n$, the integrands of the second and third integrals are uniformly bounded, thus,
	\[
	 	\E^{x,z,t}\int_t^{\tau_n} v_z(X^\pi_u,Z_u,u) \sigma_Z  dW^Z_u= \E^{x,z,t}\int_t^{\tau_n} v_x(X^\pi_u,Z_u,u) X^\pi_u \pi^\top_u \Sigma dW_u=0.
	\]
	Therefore, taking the conditional expectation on both sides of (\ref{eq:ItoExpansion}) yields
	\begin{equation}\label{eq:ExpectedIto}
		\E^{x,z,t}v(X^\pi_{\tau_n},Z_{\tau_n},\tau_n) \le v(x,z,t);\quad n\ge1, (x,z)\in \R^+\times \R\times [0,T].
	\end{equation}
	Finally, $\gamma\in(0,1)$ yields that $v$ is uniformly bounded from bellow, specifically, $v_z(X^\pi_t,Z_t,t)\ge \frac{1}{\gamma-1}$.  
	Thus, (\ref{eq:Verification}) is obtained by taking the limit of (\ref{eq:ExpectedIto}) as $n\to+\infty$ and applying Fatou's lemma.
	
	\vskip 1ex
	\noindent\textbf{Proof of (ii):} To show that $\pi^*\in\mathcal{A}$, it suffices to check the integrability condition \eqref{ch1eq:integrability} for $\big(\pi^*(Z_t,t)\big)_{0\le t\le T}$. From \eqref{eq:optimalPort}, it follows that
	\[
	\begin{split}
		|\pi_t^{*\top} \alpha Z_t| + \pi_t^{*\top} \Sigma\Sigma^\top \pi_t^* =  Z_t^2 \; \bigg\{ &\Big|\frac{1}{\gamma} \alpha^\top(\Sigma\Sigma^\top)^{-1}\alpha - \kappa h(T-t)\Big| \\
		&+ \frac{1}{\gamma^2} \alpha^\top (\Sigma\Sigma^\top)^{-1}\alpha + \sigma_Z^2 h^2(T-t) - \frac{2\kappa}{\gamma}h(T-t)\bigg\}.
	 \end{split}
	\]
	Since $T< T_N(\gamma)$, Lemma \ref{lem:Monotone} yields that $|h(T-t)|$ is uniformly bounded for $t\in[0,T]$. Furthermore, by \eqref{ch1eq:Z_moments}, $\E\int_0^T Z_u^2 du = \frac{\sigma_Z^2T}{2\kappa}<+\infty$. Thus,
	\[
		\E \int_0^T |\pi_t^{*\top} \alpha Z_t| + \pi_t^{*\top} \Sigma\Sigma^\top \pi_t^* dt <+\infty,
	\]
	and $\pi^*$ satisfies \eqref{ch1eq:integrability}.
	
	To prove \eqref{eq:Verification2}, we repeat the proof of (i) for $\pi=\pi^*$ to obtain
	\begin{equation}\label{eq:ExpectedIto2}
		\E^{x,z,t}v(X^*_{\tau_n},Z_{\tau_n},\tau_n) = v(x,z,t);\quad n\ge1, (x,z)\in \R^+\times \R\times [0,T].
	\end{equation}
	If there exists a constant $p>1$ such that
	\begin{equation}\label{eq:UI}
		\sup_{0\le t\le T}\E\big|v(X^*_t,Z_t,t)\big|^p <+\infty,
	\end{equation}
	then the sequence $\{v(X^*_{\tau_n},Z_{\tau_n},\tau_n)\}_{n=1}^{+\infty}$ is uniformly integrable and \eqref{eq:Verification2} is obtained by letting $n\to+\infty$ in \eqref{eq:ExpectedIto2}.
	
	It only remain to show \eqref{eq:UI}. For the rest of the proof, let $p>1$ be an arbitrary constant (say, $p=2$). Since $T<T_N(\gamma)$, the functions $g$ and $h$ are uniformly bounded on $[0,T]$. Thus, there exists constants $K_1,K_2>0$, such that, for all $t\in[0,T]$, 
	\[
	\begin{split}
		\big|v(X^*_t,Z_t,t)\big|^p &< \frac{(X^*_t)^{p(1-\gamma)}}{(1-\gamma)^p} \left(e^{g(T-t)+\frac{1}{2}h(T-t)Z_t^2}\right)^{p\gamma}\\
		&< K_1 \exp\Big(K_2\int_0^t Z_u du\Big)\;\e\Big(\int_0^. Z_u \Big[\frac{1}{\gamma}\alpha^\top\Sigma^{-1\top}+(h(T-u)+p\gamma)(1,-c)\Sigma\Big]dW_u\Big)_t.
	\end{split}
	\]
	Here, $\e(Y)_t:=\exp(\int_0^t dY_t -0.5\int_0^t d[Y_t])$ denotes the Dol\'eans-Dade exponential of process $(Y_t)$. By Lemma \ref{lem:StationarityOfResidual}, $\int_0^t Z_u du$ has the Gaussian distribution with mean $0$ and
	\[
		\E\big[\int_0^t Z_u du\big]^2 \le \E\big[\int_0^T Z_u^2 du\big] = \frac{\sigma^2_Z T}{2\kappa}.
	\]
	Therefore,
	\begin{equation}\label{eq:Sup1}
		\sup_{0\le t\le T} \E\exp\Big(K_2\int_0^t Z_u du\Big) = \sup_{0\le t\le T} \exp\Big(\frac{1}{2} K_2^2 \E\big[\int_0^t Z_u du\big]^2\Big)\le \exp\Big(\frac{K_2^2\sigma^2_Z T}{4\kappa}\Big)<+\infty.
	\end{equation}
	Furthermore by Corollary 11 on page 85 of \cite{Krylov1980},
	\begin{equation}\label{eq:Sup2}
	\begin{split}
		\sup_{0\le t\le T} \E \e\Big(\int_0^. Z_u &\Big[\frac{1}{\gamma}\alpha^\top\Sigma^{-1\top}
		+(h(T-u)+p\gamma)(1,-c)\Sigma\Big]dW_u\Big)_t\\
		&\le \E \sup_{0\le t\le T} \e\Big(\int_0^. Z_u \Big[\frac{1}{\gamma}\alpha^\top\Sigma^{-1\top}
		+(h(T-u)+p\gamma)(1,-c)\Sigma\Big]dW_u\Big)_t<+\infty.
	\end{split}
	\end{equation}
	Finally, we obtain \eqref{eq:UI} as follows,
	\[
	\begin{split}
		\sup_{0\le t\le T}\E\big|v(X^*_t,Z_t,t)\big|^p \le &\sup_{0\le t\le T} \E\exp\Big(K_2\int_0^t Z_u du\Big)\\
		&\times \sup_{0\le t\le T} \E \e\Big(\int_0^. Z_u \Big[\frac{1}{\gamma}\alpha^\top\Sigma^{-1\top}
		+(h(T-u)+p\gamma)(1,-c)\Sigma\Big]dW_u\Big)_t<+\infty.
	\end{split}
	\]

% \bibliography{refs}
% \bibliographystyle{alpha}

\end{document}